\newtheorem{assumption}{Assumption}
\newtheorem{lemma}{Lemma}
\newtheorem{remark}{Remark}
\newtheorem{definition}{Definition}
\newtheorem{proof}{Proof}
\def\dref#1{(\ref{#1})}
\begin{document}

\begin{frontmatter}
%\runtitle{Insert a suggested running title}  % Running title for regula% is over 5 words. Running title 
\title{ Robust Event-Based Control: Bridge Time-Domain Triggering and Frequency-Domain Uncertainties\thanksref{footnoteinfo}} .

\thanks[footnoteinfo]{This work was supported by Beijing Natural Science Foundation under grant JQ20025 and the National Natural Science Foundation of China under grants 61973006 and T2121002. Corresponding Author: Zhongkui Li.}

\author[China1]{Shiqi Zhang}\ead{zsqpkuedu@pku.edu.cn},    % Add the 
\author[China1]{Zhongkui Li}\ead{zhongkli@pku.edu.cn}               % e-mail address 
%\author[Baiae]{Publius Maro Vergilius}\ead{vergilius@culture.ir}  % (ead) as shown

\address[China1]{College of Engineering, Peking University, Beijing, China, 100871}  % Please supply                                              
%\address[China1]{Senate House, Rome}             % full addresses
%\address[Baiae]{The White House, Baiae}        % here.

\begin{keyword}  Event-triggered control, integral quadratic constraints,  robust control,  $H_{\infty}$ control, frequency-domain uncertainties.                       
\end{keyword}     
\begin{abstract} 
This paper considers the robustness of event-triggered control of general linear systems against additive or multiplicative frequency-domain uncertainties.
It is revealed that in static or dynamic event triggering mechanisms, the sampling errors are images of affine operators acting on the sampled outputs. Though not belonging to $\mathcal {RH}_{\infty}$, these operators are finite-gain $\mathcal L_2$ stable with operator-norm depending on  the triggering conditions and the norm bound of the uncertainties.  This characterization is further extended to the general integral quadratic constraint (IQC)-based triggering mechanism. As long as the triggering condition characterizes an $\mathcal L_2$-to-$\mathcal L_2$ mapping relationship (in other words, small-gain-type constraints) between the sampled outputs and the sampling errors, the robust event-triggered controller design problem can be transformed into the standard $H_{\infty}$ synthesis problem of a linear system having the same order as the controlled plant. Algorithms are provided to construct the robust controllers for the static, dynamic and  IQC-based event triggering cases. %Simulation examples are presented to verify the effectiveness of the obtained results.
\end{abstract}
\end{frontmatter}

\section{Introduction}

In  control practice such as that of robots, unmanned vehicles and electronic elements, more and more implementations rely on the digital platforms, in which sampling control plays a major role among various control methods. Compared with the traditional control techniques  that  are designed to measure, compute and actuate  continuously, digital controllers sample the output signals discretely, rebuild the estimation via interpolation and then develop controllers  based on the estimation. 
With the development of digitalization, various kinds  of digital algorithms corresponding to different sampling techniques are proposed and the widely used one is periodic sampling control \cite{chenandfrancis1991}. 
Beyond the periodic-sampling control technique, researchers have also developed a variety of aperiodic-sampling control strategies \cite{Laurentiu2017}, which allow to understand the behavior of networked control systems with sampling jitters, packet dropouts or fluctuations due to the interaction between control algorithms and real-time scheduling. First introduced in \cite{Astrom1999}, event-based control (also called Lebesgue sampling control), as a special form  of the aperiodic-sampling control, has been attracting more and more attentions in the control community over the past two decades \cite{tabuada2007,Henningsson2008,Lunze2010,Gawthrop2009,Lemmon2011,xiao2021}. 

The essence of  event-based control is to introduce a triggering function whose value is used as an indicator to decide whether or not  the system is going to do a new sampling.  
Typically, the triggering function is positively correlated to the sampling error and the triggering mechanism is designed such that the sampling will be executed only when the sampling error accumulates and exceeds a certain `triggering level'.
Prominently reducing the frequency of signal sampling and controller updating, event-triggered control can attain the control objectives as well as guarantee the control performance with the triggering mechanism properly designed and thus provides a new control design option  that can help save control energy and resources.

In \cite{Lunze2010}, the performance of the event-triggered control method is shown to be able to  arbitrarily close to continuous control method. 
\cite{Antoine2015} develops  dynamic event-triggered mechanism in order to further reduce the  load of the sampling and to bring more degree of freedom to the controller design.  
In \cite{Donkers2012}, a dynamic output feedback controller as well as a decentralized event-triggered mechanism  is provided to guarantee internal stability and $\mathcal L_{\infty}$ performance.    
\cite{Dolk2017} considers the $\mathcal L_p$ performance for decentralized event-triggered control algorithms.
In \cite{xiao2021},  the authors tackle the stability as well as the $\mathcal L_2$ performance of  networked linear systems  with asynchronous continuous-time or discrete-time event triggering.  
The event-triggered mechanism has also been applied and extended to various other  research domains, such as nonlinear systems \cite{Postoyan2015}, multi-agent  systems \cite{ding2018}, \cite{Nowzari2019}, multi-UAV systems \cite{Ristevski2021}, and so on. 

In practice, uncertainties are ubiquitous and almost everywhere. For a control algorithm to be usable,  robustness  against uncertainties  is  fundamental and thus should be ensured in high priority before other control objectives are considered. Many efforts have been made to tackle the robustness problem of the event-triggered control algorithms. 
For example, \cite{Tripathy2017} and \cite{Kishida2019} consider the robust event-triggered stabilization problem for discrete-time uncertain systems while \cite{Seuret2019} and \cite{Insight} study the continuous-time case. \cite{staticanddynamic} develops the event-triggered state-feedback controllers for a class of nonlinear systems subject to Lipschitz constraints.  \cite{xing2019} investigates the internal stability  of a class of uncertain nonlinear systems under a new event-triggered mechanism via Lyapunov analysis method. In \cite{liu2021}, the authors address the event-triggered control problem for linear systems with uncertain parameters using the hybrid-system approach.  
It should be stressed that these aforementioned works either only consider some special types of system models or can only handle some special forms of uncertainties  such as parametric uncertainties in the time domain. A larger class of uncertainties, namely, frequency-domain uncertainties, is more pervasive in practice \cite{linearrobustcontrol}. Therefore, to investigate the robustness of event-triggered control with respect to frequency-domain uncertainties is evidently an  interesting and important topic, which however remains unsolved.

This paper intends to develop a systematic method for addressing the robust event-triggered output-feedback control problem of general linear systems perturbed by frequency-domain uncertainties, %, by establishing event-triggering functions condition sample both the output and the control, and designing the event-based dynamic output feedback controllers.  
which is definitely a quite challenging task. The main difficulty is due to the fact that the uncertainties described as transfer functions in the frequency domain are not evidently compatible with the event-triggered sampling mechanism in the time domain. The gap needs to bridge them in order to design a time-domain event-based controller to stabilize a linear system in the presence of frequency-domain uncertainties. 
Moreover, it is harder to exclude the Zeno behavior under frequency-domain uncertainties, since Zeno behavior is also a continuous sampling phenomenon described in the time domain.  

One contribution of this paper is that under the static and dynamic quadratic-form event triggering mechanisms, the sampling errors are shown to be images of affine operators acting on the sampled outputs. These operators, though not belonging to $\mathcal {RH}_{\infty}$, the set of all real rational stable transfer functions, are finite-gain $\mathcal L_2$ stable, whose operator-norm depends on the triggering conditions and the norm bound of the uncertainties.  
This operator-theoretic approach paves the way to address the robust event-triggered control problem under frequency-domain uncertainties by utilizing classical robust control tools such as the small gain theorem and the integral quadratic constraint (IQC) theory.
 The robust event-triggered output-feedback stabilizing controller is then designed by solving the standard $H_{\infty}$ synthesis problem of a linear system having the same order as the controlled plant. It should be noted that even though only additive and multiplicative uncertainties are considered in this paper, the proposed operator-theoretic approach could easily handle other type of frequency-domain uncertainties and even nonlinear uncertainties with Lipschitz constraints, the latter of which are also finite-gain $\mathcal L_2$-stable \cite{linearrobustcontrol}.
 
The event triggering mechanisms are further extended to the IQC-based event triggering ones. 
Including the static and dynamic event-triggered mechanisms as special cases, the IQC-based event-triggered mechanism are more general, giving more freedom for control performance optimization,  since it allows the parameters in the triggering condition to be dynamical transfer functions which can also be coupled with each other. As long as the designed triggering condition implies an IQC that corresponds  to a `small-gain' quadratic functional, the robust event-triggered control law can be designed in the same framework.

 It is worthy mentioning that the robustness of the event-triggered consensus network against frequency-domain uncertainties is studied in \cite{zhang&li2021} from a similar operator-theoretic viewpoint. Nonetheless, only single-integrator systems and state feedback control are considered in \cite{zhang&li2021}.  In contrast, this paper addresses general high-order linear systems and the output feedback case. More complex system dynamics and the absence of state measurements impose a considerable difficulty in designing feedback gain matrices of the output-feedback controllers; the issue does not exist with single-integrator systems in \cite{zhang&li2021}, where only scalar state feedback gains need to be designed. Moreover, two event-triggered samplers exist between the plant and the controller in this paper, which brings additional difficulties.

The remaining part of this paper is organized as follows.  Some necessary preliminaries on operator theory and robust control are provided in Section \ref{s2}. The problem formulation is presented in Section \ref{s3}. Static event-triggered robust stabilizing control for  linear systems subject to additive and multiplicative uncertainties is considered in Section \ref{s4}. Dynamic event-triggered protocols are further considered in Section \ref{s6}. Section \ref{IQCsection}  extends the robust control problem to the general IQC-based event-triggered mechanism. Section \ref{s7} presents the simulation results and Section \ref{s8} finally concludes this paper.

\section{Mathematical Preliminaries}\label{s2}
\subsection{Operator Theory and Banach  Space }\label{A}
 
%\begin{definition}\label{df1}
%	\cite{Desoer1975feedback} Let $\mathcal H_1$ and $\mathcal H_2$ be two Banach  spaces. An operator $\phi(\cdot):\mathcal H_1 \mapsto \mathcal H_2$ is called a linear operator if the following two conditions hold:
	
%	1)  $\phi(x+y)=\phi(x)+\phi(y)$, for $\forall x,y \in \mathcal H_1$;
	
%	2) $\phi(\lambda x)=\lambda \phi(x),$ for $\forall$ $\lambda\in \mathbf R$.
%\end{definition}
	
\begin{definition}
	The $\mathcal L_2$ norm  of a signal $f$  is defined as $$\|f\|_2=\sqrt{\int_0^{\infty}f^*(t)f(t)dt}=\sqrt{\frac{1}{2\pi}\int_{-\infty}^{\infty}f^*(j\omega)f(j\omega)d\omega},$$ where $\mathcal L_2$ denotes the Banach space with $\mathcal L_2$ norm well defined. 
\end{definition}
\begin{definition}
	\cite{Desoer1975feedback} Letting $\phi(\cdot):\mathcal L_2\mapsto\mathcal L_2$ be an operator  such that for $\forall x\in \mathcal L_2$, $\|\phi(x)\|_2\leq\gamma\|x\|_2+\beta$, where $\gamma>0$ and $\beta>0$ are positive constants, this operator is called a finite-gain $\mathcal L_2$ stable operator with operator norm $\|\phi\|_{\infty}\leq\gamma$.
\end{definition}
\begin{lemma}\label{l2}
	\cite{Desoer1975feedback} Let $\phi(\cdot):\mathcal{L}_2\mapsto \mathcal{L}_2$ denote an affine  (finite-gain $\mathcal L_2$ stable)  operator in the time domain and suppose that $y(t)=\phi(x(t))$, where $y(t)$ and $x(t)$ are vectors in the  $\mathcal L_2$ space. Denote by $y(s)$ and $x(s)$ the Laplace transformation of $y(t)$ and $x(t)$. It then follows that $y(s)=\Delta(x(s))$, where $  \Delta(\cdot):\mathcal L_2\mapsto \mathcal L_2$ denotes a affine  (finite-gain $\mathcal L_2$ stable) operator in the frequency domain. 
	\end{lemma}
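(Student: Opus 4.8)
\emph{Proof idea.} The plan is to realize the frequency-domain operator $\Delta$ as the conjugate of $\phi$ by the (suitably normalized) Laplace/Fourier transform, and then to transport both the affine structure and the finite-gain bound across this conjugation. Write $\mathcal F$ for the map sending a causal signal $x(t)\in\mathcal L_2$ to its Laplace transform restricted to the imaginary axis, i.e. $x(j\omega)$. By the Paley--Wiener theorem $\mathcal F$ is a bijection from (causal) $\mathcal L_2$ onto the corresponding space of frequency-domain functions, and with the normalization adopted in Definition~1 (the $\tfrac{1}{2\pi}$ weight inside the frequency-domain integral), Plancherel's theorem gives that $\mathcal F$ is an isometry: $\|\mathcal F x\|_2=\|x\|_2$ for all $x$. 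Define $\Delta:=\mathcal F\circ\phi\circ\mathcal F^{-1}$. Then $y(t)=\phi(x(t))$ is equivalent to $y(j\omega)=\Delta(x(j\omega))$, which is exactly the claimed identity $y(s)=\Delta(x(s))$ evaluated at $s=j\omega$.

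Next I would check that $\Delta$ inherits affinity. Affinity of $\phi$ means $\phi(x)=Tx+b$ for some linear operator $T$ on $\mathcal L_2$ and a fixed element $b=\phi(0)\in\mathcal L_2$ (equivalently, $\phi$ preserves convex combinations). Since $\mathcal F$ and $\mathcal F^{-1}$ are linear, $\Delta(X)=\mathcal F T\mathcal F^{-1}X+\mathcal F b=:\tilde T X+\tilde b$, again the sum of a linear operator and a constant, hence affine in the frequency domain. The finite-gain bound then transfers verbatim: for any frequency-domain $X\in\mathcal L_2$,
\[
\|\Delta(X)\|_2=\|\mathcal F\,\phi(\mathcal F^{-1}X)\|_2=\|\phi(\mathcal F^{-1}X)\|_2\le\gamma\|\mathcal F^{-1}X\|_2+\beta=\gamma\|X\|_2+\beta,
\]
using the isometry property twice and the hypothesis $\|\phi(x)\|_2\le\gamma\|x\|_2+\beta$. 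Hence $\Delta$ is finite-gain $\mathcal L_2$ stable with $\|\Delta\|_\infty\le\gamma=\|\phi\|_\infty$.

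The step I expect to be the only genuine obstacle is the first one: making precise that the Laplace transform of an $\mathcal L_2$ signal supported on $[0,\infty)$ is well defined as an element of the frequency-domain $\mathcal L_2$ space and acts isometrically there. This is where causality (one-sided support) and the Paley--Wiener/Plancherel machinery are needed; once $\mathcal F$ is established as a linear surjective isometry, both the affine structure and the gain bound pass through for free by conjugation. A minor companion point to state carefully is that $\mathcal F^{-1}$ returns a causal time-domain signal, so that $\phi$ — which is only assumed to act on (causal) time-domain $\mathcal L_2$ signals — can legitimately be applied to $\mathcal F^{-1}X$.
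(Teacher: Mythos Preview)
Your argument is correct: conjugating $\phi$ by the Laplace/Fourier isometry $\mathcal F$ immediately transports both the affine decomposition $\phi=T+b$ and the finite-gain inequality to the frequency domain, and the Paley--Wiener/Plancherel step you flag is indeed the only point requiring care. Note, however, that the paper does not supply its own proof of this lemma; it is stated in the preliminaries as a cited result from \cite{Desoer1975feedback}, so there is no paper-proof to compare against. Your write-up is the standard justification one would expect for such a statement.
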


\subsection{Robust Control Theory}\label{B}

\begin{lemma}\label{smallgain}
	\cite{Desoer1975feedback} (Small Gain Theorem) Supposing that $G(\cdot), \Delta(\cdot):\mathcal L_2\mapsto\mathcal L_2$ are finite-gain $\mathcal L_2$ stable operators with operator norms $\|G\|_{\infty}=\gamma_1$ and $\|\Delta\|_{\infty}=\gamma_2$. Moreover, $v(s)= G(s)w(s)$ and $w(s)=\Delta(v(s))$. The system interconnection composed of $G(\cdot)$ and $\Delta(\cdot)$ is internally stable, if $\gamma_1\gamma_2<1$.
 \end{lemma}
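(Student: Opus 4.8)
\emph{Proof proposal.} The plan is to reduce the statement to the classical truncation argument for the small gain theorem, carried out in the extended space $\mathcal L_{2e}$. First I would make the feedback interconnection explicit by injecting exogenous signals $d_1,d_2\in\mathcal L_2$, so that the loop equations read $v=G(w)+d_1$ and $w=\Delta(v)+d_2$. Internal stability then means precisely that, for every such $d_1,d_2$, any solution pair $(v,w)$ of these equations belongs to $\mathcal L_2$ with norm bounded linearly by $\|d_1\|_2$ and $\|d_2\|_2$; the unperturbed interconnection $d_1=d_2=0$ is recovered by specialization.

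Next, for an arbitrary horizon $T>0$ I would apply the truncation $(\cdot)_T$ to $[0,T]$ and combine causality with the finite-gain bounds coming from the definition of finite-gain $\mathcal L_2$ stability, namely $\|(Gx)_T\|_2\le\gamma_1\|x_T\|_2+\beta_1$ and $\|(\Delta x)_T\|_2\le\gamma_2\|x_T\|_2+\beta_2$. These give
\begin{align}
\|v_T\|_2 &\le \gamma_1\|w_T\|_2+\beta_1+\|d_1\|_2,\\
\|w_T\|_2 &\le \gamma_2\|v_T\|_2+\beta_2+\|d_2\|_2.
\end{align}
Substituting the second estimate into the first and collecting the $\|v_T\|_2$ term yields
\begin{equation}
(1-\gamma_1\gamma_2)\,\|v_T\|_2 \le \beta_1+\gamma_1\beta_2+\|d_1\|_2+\gamma_1\|d_2\|_2 .
\end{equation}
Since $\gamma_1\gamma_2<1$ by hypothesis, the coefficient $1-\gamma_1\gamma_2$ is strictly positive, so the right-hand side is a bound on $\|v_T\|_2$ independent of $T$. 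Letting $T\to\infty$ (monotone convergence) shows $v\in\mathcal L_2$ with an explicit bound, and the symmetric computation gives $w\in\mathcal L_2$; hence the interconnection maps $\mathcal L_2$ disturbances to $\mathcal L_2$ internal signals, that is, it is internally stable.

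The step I expect to be the main obstacle is not the inequality chain above but the well-posedness underlying it: before the truncated estimates can be propagated one must know a priori that the loop equations admit a (unique) solution $(v,w)$ in $\mathcal L_{2e}$, and that $G$ and $\Delta$ act causally so that truncation interacts with them as written. I would therefore record these as standing hypotheses---they hold for the plant/triggering-operator structure used later in the paper---and then run the argument. The additive constants $\beta_1,\beta_2$, present because the triggering operators treated here are affine rather than linear, cause no real difficulty: they propagate through exactly the same computation and only shift the offset of the final bound, leaving the gain condition $\gamma_1\gamma_2<1$ untouched.
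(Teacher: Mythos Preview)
The paper does not supply its own proof of this lemma: it is stated in the preliminaries and attributed directly to \cite{Desoer1975feedback}, so there is nothing to compare against beyond the classical reference itself. Your argument is precisely the standard truncation proof from that reference---inject exogenous inputs, bound truncated signals via causality and the finite-gain inequalities, close the loop algebraically, and pass to the limit---and it is correct as written, including your explicit flagging of the well-posedness hypothesis and the harmless role of the affine offsets $\beta_1,\beta_2$.
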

%\begin{figure}[!t]
%\centerline{\includegraphics[width=0.8\columnwidth]{}}
%\caption{System interconnection.}
%\label{loop}
%\end{figure}

\section{Problem Formulation}\label{s3}
\subsection{Feedback Loop Structure}\label{structure}
In this paper, we focus on the robust stabilization control problem of an uncertain system $P_{\Delta}(s)$. More specifically, the uncertain systems we study can be represented  as a strictly proper linear nominal system $$P_0(s)=\left[\begin{smallmatrix} 
\begin{array}{c|c}
	A  & B \\ \hline 
	C &  0
\end{array}
\end{smallmatrix}\right]$$  subject to various kinds of uncertainties. For example,   we have  $P_{\Delta}(s)=P_0(s)+\Delta(s)$ for a linear system with an additive perturbation and  $P_{\Delta}(s)=(I+\Delta(s))P_0(s)$ for a multiplicative perturbation. To achieve the control goal, a linear feedback controller 

$$ K(s)=\left[\begin{smallmatrix} 
\begin{array}{c|c}
	A_k  & B_k \\ \hline 
	C_k &  D_k
\end{array}
\end{smallmatrix}\right] $$
is to be designed. 

Throughout this paper, the following assumption holds.
\begin{assumption}\label{as1}
	$(A,B)$ is stabilizable and $(A,C)$ is detectible.
\end{assumption}
There are two samplers $S_1$ and $S_2$  in the feedback loop used to sample the output value $y$ of the plant $P_{\Delta}(s)$ and $u$ of the controller $K(s)$, respectively. The structure of the closed-loop system is depicted in Fig. \ref{fig1}.

\subsection{Event-Triggered Mechanism}\label{mechanism}
To save the computation resources and reduce the frequency of the control updating, we adopt the  event-triggered sampling mechanism proposed in \cite{Heelels2012}.
Here we will  explain specifically  how the sampling system operates  in the event-triggering mechanism. 
The sampler $S_1$ detects the real output $y$ of the  plant $P_{\Delta}(s)$, and calculates the error between it and the estimate of that output $\hat y$ it holds. We use $\epsilon_y=\hat y -y$ to denote this sampling  error. 
Similarly, the sampler $S_2$ detects the real value of the output $u$ of the controller $K(s)$, holds an estimate of it and calculates the sampling  error $\epsilon_u=\hat u-u$.
A  triggering function $f=f(\epsilon_y,\epsilon_u, \hat y, \hat u, t)$ needs be designed  to decide when is the next triggering instant, i.e., when the next sampling has to be done. Typically, we set $f\geq0$ to indicate the next triggering instant. 
At the next triggering instant, say $t_k$, the samplers $S_1$ and $S_2$ update the estimate $\hat y$ to $y(t_k)$ and $\hat u$ to $u(t_k)$ and send them to the controller $K$ and the plant $P_{\Delta}$, respectively. During the two triggering instants, both the estimate $\hat y$ of the output of the plant and the estimate $\hat u$ of the output of the controller  remain constant.

\subsection{Robust Control Objective}
Designing a robust feedback controller $K(s)$ as well as the triggering mechanism to stabilize the perturbed system $P_{\Delta}(s)$ is the essential control objective to be handled in this paper. Meanwhile, the closed-loop system must not exhibit Zeno behavior.
The stability we need to achieve is the so-called `internal stability' \cite{zhou1998essentials}, which requires that disturbance injected from any position in the feedback loop shown in Fig. \ref{fig1} will not lead any divergence phenomenon of the  internal signals.

 \begin{figure}
\begin{center}
\includegraphics[height=6cm]{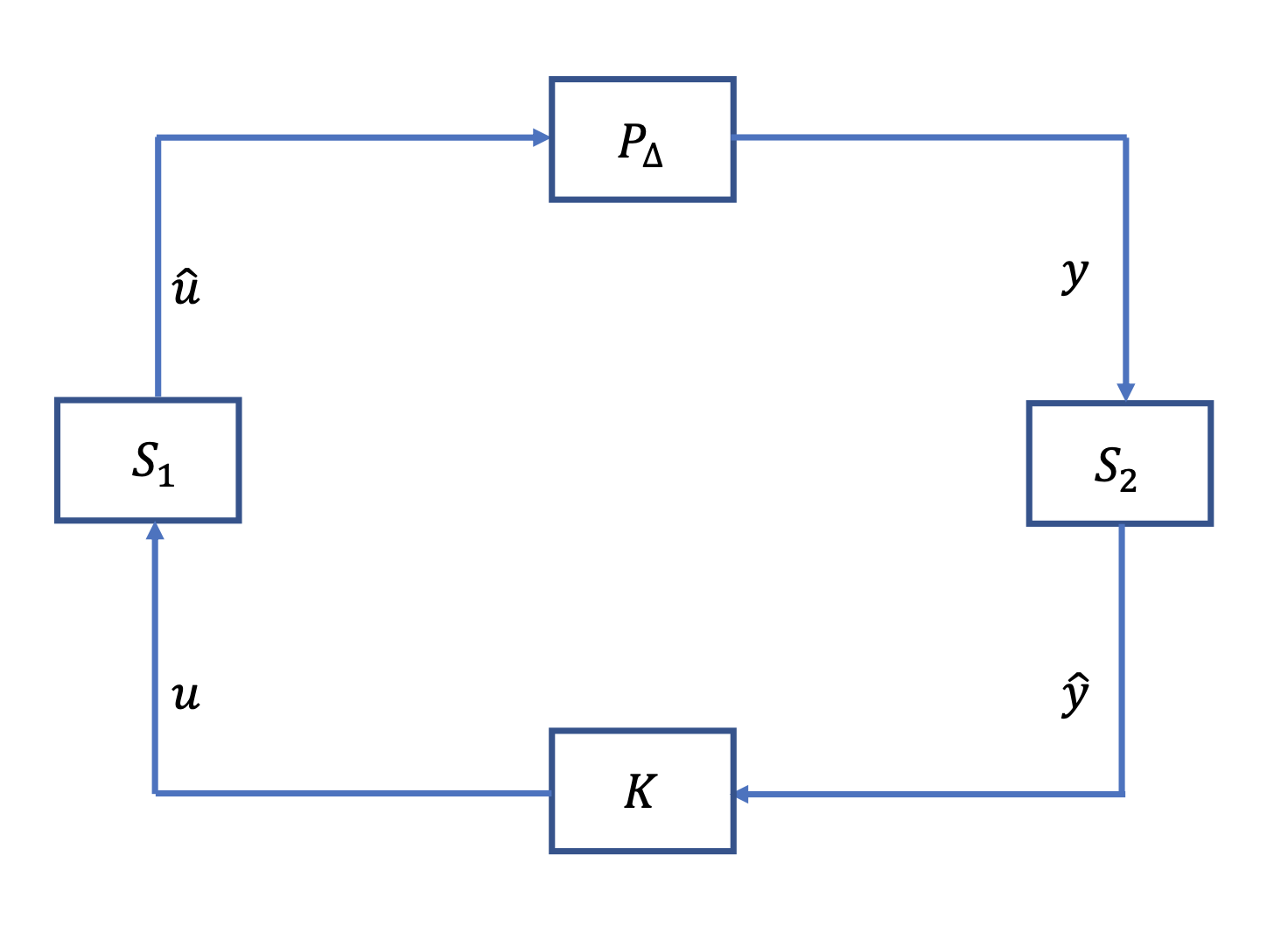}      
\caption{Block diagram representation of the closed-loop system, where $P_{\Delta}(s)$ is the perturbed linear system to be controlled, $K$ is the controller to be developed, $S_1$ and $S_2$ are two event-triggered samplers. }  % width is 8.4 cm.
\label{fig1}                                 
\end{center}                                 
\end{figure}

\section{Robust Event-Triggered  Stabilizing Control}\label{s4}
\subsection{Linear Systems with Additive Uncertainties}
In this subsection, we aim to deal with the case where $P_{\Delta}(s)$ can be seen as a nominal linear plant $P_0(s)$  perturbed by an additive dynamic uncertainty $\Delta(s)$, i.e., $P_{\Delta}(s)=P_0(s)+\Delta(s)$ with $$\Delta(s)=\left[\begin{smallmatrix} 
\begin{array}{c|c}
	A_{\Delta}  & B_{\Delta} \\ \hline 
	C_{\Delta} &  D_{\Delta}
\end{array}
\end{smallmatrix}\right]\in \mathcal{RH}_{\infty}$$ and $\|\Delta(s)\|_{\infty}\leq\eta$. 

Expressing the above system in the state-space form, we have \begin{equation}\label{eq1}
\begin{aligned}
	\dot x_p &= Ax_p+B\hat u,\\
	y &= Cx_p +d,
\end{aligned}\end{equation}
and \begin{equation}\label{eq2}
\begin{aligned}
	\dot \xi &=A_{\Delta}\xi +B_{\Delta}\hat u,\\
	d  & = C_{\Delta}\xi,
\end{aligned}
	\end{equation} where $x_p, \xi$ denote the internal states of the nominal linear system $P_0(s)$ and the perturbation $\Delta(s)$, respectively, and  $A,B,C,A_{\Delta},B_{\Delta},C_{\Delta}$ are matrices with compatible dimensions.

Also, the controller $K(s)$ can be written in the state-space form as \begin{equation}\label{eq3}
	\begin{aligned}
		\dot x_k&=A_k x_k+B_k\hat y,\\
		u& = C_k x_k+D_k \hat y,
	\end{aligned}
\end{equation}
where $x_k$ denotes the internal state of $K(s)$.
In this section, we adopt the triggering function developed in \cite{Heelels2012} with some modifications:
\begin{equation}\label{triggercondition}
	f=\begin{bmatrix}
	\epsilon_y\\ \epsilon_u
\end{bmatrix}^T\begin{bmatrix}
	\epsilon_y\\ \epsilon_u
\end{bmatrix}-\begin{bmatrix}
	\hat y\\ \hat u 
\end{bmatrix}^T\begin{bmatrix}
	\Omega_1 & \\ & \Omega_2
\end{bmatrix}\begin{bmatrix}
	\hat y\\ \hat u 
\end{bmatrix}-\mu e^{-\nu t},
\end{equation}
where $\Omega_1>0,\Omega_2>0,\mu>0,\nu>0$ are the event-triggering parameters to be determined.
%The next triggering (sampling) instant is the first time such that $f\geq 0$.

It should be noted that the sampling error $\begin{bmatrix}
\epsilon_y^T & \epsilon_u^T 
\end{bmatrix}^T$ is  updated to zero instantly after each triggering instant, then increases during the two triggering instants and will be reset to zero again at the next triggering instant. The triggering condition ensures that the upper bound of the sampling error is closely related to $\hat y$ and $\hat u$. Actually, this relationship can be characterized as an affine mapping (operator). The next theorem characterizes  this relationship.

\begin{thm}\label{thm1}
Define $\epsilon=[\epsilon_y^T,\epsilon_u^T]^T$ and $v=[\hat y^T, \hat u^T]^T$ and $\gamma=\sqrt{\max\{\rho(\Omega_1),\rho(\Omega_2+\eta^2I)\}}$.  It then follows that under the triggering mechanism described in Section \ref{mechanism}, with the triggering function designed above, we have 
$$\begin{bmatrix}
	\epsilon(s)\\ d(s)
\end{bmatrix}=\Gamma(s)v(s),$$ where $\Gamma(\cdot)$ is an affine  operator whose expression will be given in the proof. Moreover,  $\Gamma(\cdot)$ is finite-gain $\mathcal L_2$-stable with operator gain $\|\Gamma\|_{\infty}\leq \gamma$.
\end{thm}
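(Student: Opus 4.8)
The plan is to convert the time-domain triggering rule into an $\mathcal{L}_2$-gain inequality and then to append the contribution of the perturbation output $d$. Concretely: (i) exhibit the affine map $\Gamma$ that sends the held signal $v=[\hat y^T,\hat u^T]^T$ to the pair $[\epsilon^T,d^T]^T$; (ii) read off from the triggering condition a pointwise-in-time quadratic bound on $\epsilon$; (iii) integrate that bound over $[0,\infty)$ to obtain $\|\epsilon\|_2$ in terms of $\|\hat y\|_2$ and $\|\hat u\|_2$; (iv) bound $\|d\|_2$ by $\eta\|\hat u\|_2$ using $\|\Delta\|_{\infty}\le\eta$; (v) stack the two bounds \emph{at the level of squared norms}, so that no cross term appears and the stated constant $\gamma=\sqrt{\max\{\rho(\Omega_1),\rho(\Omega_2+\eta^2I)\}}$ emerges; (vi) invoke Lemma \ref{l2} to pass to the frequency domain.

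For step (i), feeding $\hat u$ into the plant/uncertainty equations and $\hat y$ into the controller equations determines $y=Cx_p+d$, $u=C_kx_k+D_k\hat y$, $d=C_{\Delta}\xi$, and hence $\epsilon_y=\hat y-y$, $\epsilon_u=\hat u-u$. The resulting map $v\mapsto[\epsilon^T,d^T]^T$ is affine: its linear part is assembled from $(sI-A)^{-1}B$, $\Delta(s)$ and $C_k(sI-A_k)^{-1}B_k+D_k$, while a constant part collects the transients from $x_p(0),x_k(0),\xi(0)$. This is the operator $\Gamma$. Establishing finite-gain $\mathcal{L}_2$ stability of $\Gamma$ is precisely what makes Lemma \ref{l2} applicable and delivers $[\epsilon(s)^T,d(s)^T]^T=\Gamma(s)v(s)$, so the remaining work is the gain estimate — for which the triggering condition, rather than the (possibly large-gain) transfer-function form just written, must be used.

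For steps (ii)--(v): sampling occurs only when $f\ge0$, and at each triggering instant $\epsilon$ is reset to zero, so on every inter-event interval $f<0$, and therefore $\|\epsilon(t)\|^2\le\hat y(t)^T\Omega_1\hat y(t)+\hat u(t)^T\Omega_2\hat u(t)+\mu e^{-\nu t}$ for all $t\ge0$. With $\Omega_1,\Omega_2$ symmetric positive definite, $\hat y^T\Omega_1\hat y\le\rho(\Omega_1)\|\hat y\|^2$ and $\hat u^T\Omega_2\hat u\le\rho(\Omega_2)\|\hat u\|^2$; integrating and using $\int_0^{\infty}e^{-\nu t}dt=1/\nu$ yields
\begin{equation*}
\|\epsilon\|_2^2\le\rho(\Omega_1)\|\hat y\|_2^2+\rho(\Omega_2)\|\hat u\|_2^2+\mu/\nu .
\end{equation*}
Since $\|\Delta\|_{\infty}\le\eta$ and $\Delta\in\mathcal{RH}_{\infty}$, we have $\|d\|_2^2\le\eta^2\|\hat u\|_2^2$ up to a bounded transient that we fold into $\beta$. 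Adding the two and using $\rho(\Omega_2)+\eta^2=\rho(\Omega_2+\eta^2I)$,
\begin{equation*}
\left\|\begin{bmatrix}\epsilon\\ d\end{bmatrix}\right\|_2^2\le\rho(\Omega_1)\|\hat y\|_2^2+\rho(\Omega_2+\eta^2I)\|\hat u\|_2^2+\beta^2\le\gamma^2\|v\|_2^2+\beta^2 ,
\end{equation*}
because $\|v\|_2^2=\|\hat y\|_2^2+\|\hat u\|_2^2$. Taking square roots gives $\|\Gamma(v)\|_2\le\gamma\|v\|_2+\beta$, i.e. $\Gamma$ is finite-gain $\mathcal{L}_2$ stable with $\|\Gamma\|_{\infty}\le\gamma$, and Lemma \ref{l2} transports this to the frequency domain, completing the proof.

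The step I expect to be the main obstacle is the bridge between the two worlds — turning the time-domain triggering rule into a clean input--output ($\mathcal{L}_2$/operator-norm) statement. The delicate points are: (a) justifying that the strict inequality $f<0$ on inter-event intervals yields a bound valid for \emph{all} $t$, in particular at the event instants where $\epsilon$ jumps to zero, so that the integration in (iii) is legitimate; (b) pinning down in what precise sense $\Gamma$ is a well-defined operator $\mathcal{L}_2\mapsto\mathcal{L}_2$ — its transfer-function form has large gain, so $\|\Gamma\|_{\infty}\le\gamma$ should be read as a gain estimate valid on the class of held signals generated by the mechanism, which is exactly the class relevant to the later small-gain/IQC arguments; and (c) combining the $\epsilon$- and $d$-bounds at the squared-norm level, and lumping $\mu/\nu$ together with the initial-condition transients into a single finite $\beta>0$, so that the sharp constant $\gamma=\sqrt{\max\{\rho(\Omega_1),\rho(\Omega_2+\eta^2I)\}}$ — and not $\max\{\sqrt{\rho(\Omega_1)},\sqrt{\rho(\Omega_2)}+\eta\}$ — is obtained.
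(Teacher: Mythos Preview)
Your proposal is correct and follows essentially the same route as the paper: exhibit the affine map from $v$ to $[\epsilon^T,d^T]^T$ via the plant/controller/uncertainty state equations, integrate the pointwise triggering inequality $f\le 0$ over $[0,\infty)$ to bound $\|\epsilon\|_2^2$, add the bound $\|d\|_2^2\le\eta^2\|\hat u\|_2^2$ from $\|\Delta\|_\infty\le\eta$, and extract $\gamma^2=\max\{\rho(\Omega_1),\rho(\Omega_2+\eta^2 I)\}$ from the resulting block-diagonal quadratic form. The paper passes through Parseval's identity to write the final inequality in the frequency domain, whereas you remain in the time domain --- but these are equivalent, and your explicit discussion of the delicate points (a)--(c) is in fact more careful than the paper's treatment.
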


\begin{proof}
	When $t\in[t_k,t_{k+1})$, $$\begin{aligned}
		\epsilon_y& =\hat y -y\\
		&=-Ce^{A(t-t_k)}x(t_k)-\int_{t_k}^t
Ce^{A(t-\tau)}B\hat u(\tau) d \tau-d(t)\\
 & \quad +\hat y.\end{aligned}$$
 Noticing that $d(s)=\Delta(s)\hat u(s)$, thereby by Lemma \ref{l2} we have $d(t)=\delta(\hat u(t))$, where $\delta(\cdot)$ is a linear operator in the time domain. Denoting $-\int_{t_k}^t
Ce^{A(t-\tau)}B\hat u(\tau) d \tau$ as $\phi(\hat u(t))$ with $\phi(\cdot)$ a linear operator. Then it follows that 
 $$\begin{aligned}
 \epsilon_y & =-Ce^{A(t-t_k)}x(t_k)+\phi(\hat u(t))-\delta(\hat u(t))+\hat y(t) \\
 & =\phi'(\hat u(t)) + \phi''(\hat y(t)) ,
 	\end{aligned}$$
 	where $\phi'(\cdot)$ and   $\phi''(\cdot)$ are affine operators in the time domain. 
 Therefore, in light of Lemma \ref{l2}, we have that $\epsilon_y(s)=\Phi_2(\hat u(s))+\Phi_1(\hat y(s))=\Phi(v(s))$ where $\Phi_2(\cdot)$, $\Phi_1(\cdot)$ and $\Phi(\cdot)$ are affine operators in the frequency domain. 
 Similarly, $$\begin{aligned}
 	\epsilon_u &= \hat u-u\\
 	& = \hat u-D_k \hat y-\int_{t_k}^tC_k e^{A_k(t-\tau)}B_k\hat y(\tau)d \tau\\
 	&\quad-C_ke^{A_k(t-t_k)}x_k(t_k)\\
 	& = \psi''(\hat u(t))
+\psi'(\hat y(t)), \end{aligned}$$
 where $\psi'(\cdot)$ and  $\psi''(\cdot)$ are affine operators in the time domain.
Therefore, $\epsilon_u(s)=\Psi_1(\hat y(s))+\Psi_2(\hat u(s))=\Psi(v(s))$ where $\Psi_1(\cdot)$, $\Psi_2(\cdot)$ and $\Psi(\cdot)$ are affine operators in the frequency domain.
 Since $\Delta(s)\in \mathcal{RH}_{\infty}$,  it is definitely an affine operator in the frequency domain. Thus, $d(s)=\Delta(u(s))=\Delta'(v(s))$, where $\Delta'(\cdot)$ is also an affine operator in the frequency domain. 
 
Based on the above discussions, we have 
$$\begin{bmatrix}
	\epsilon(s) \\ d(s)
\end{bmatrix}=\begin{bmatrix}
	\Phi(\cdot) \\ \Psi(\cdot)  \\  \Delta'(\cdot)
\end{bmatrix}v(s)=\Gamma(v(s)),$$
where $\Gamma(\cdot)=\begin{bmatrix}
	\Phi(\cdot) \\ \Psi(\cdot)  \\  \Delta'(\cdot)
\end{bmatrix} $ is an affine operator. 
Next, we prove that the affine-operator $\Gamma(\cdot)$ is finite-gain $\mathcal{L}_2$ stable and calculate its $\mathcal L_2$ norm bound. 
Since $d(s)=\Delta(s)\hat u(s)$ and $\|\Delta\|_{\infty}\leq \eta$, we have 
\begin{equation}\label{eq4}
	\int_{-\infty}^{\infty}d^*(j\omega)d(j\omega)d\omega\leq\eta^2\int_{-\infty}^{\infty}\hat u^*(j\omega)\hat u(j\omega)d\omega.
\end{equation}
From the triggering condition shown in \dref{triggercondition}, it is clear that $$
\begin{aligned}
	\int_0^{\infty} f(t)dt = & \int_0^{\infty}
	\epsilon^T
	\epsilon dt\\
 & -\int_0^{\infty}\begin{bmatrix}
	\hat y\\ \hat u 
\end{bmatrix}^T\begin{bmatrix}
	\Omega_1 & \\ & \Omega_2
\end{bmatrix}\begin{bmatrix}
	\hat y\\ \hat u 
\end{bmatrix}dt-\frac{\mu}{\nu} \leq 0.
\end{aligned}
$$ 
In light of the Parseval identity \cite{Desoer1975feedback}, it is easy to see that
\begin{equation}\label{eq5}
	\begin{aligned}
	&\frac{1}{2\pi}\int_{-\infty}^{\infty}\epsilon^*(j\omega)\epsilon(j\omega) d\omega\\ \leq & \frac{1}{2\pi}\int_{-\infty}^{\infty}\begin{bmatrix}
	\hat y(j\omega)\\ \hat u(j\omega) 
\end{bmatrix}^*\begin{bmatrix}
	\Omega_1 & \\ & \Omega_2
\end{bmatrix}\begin{bmatrix}
	\hat y(j\omega)\\ \hat u (j\omega)
\end{bmatrix}d\omega+\frac{\mu}{\nu}.
\end{aligned}
\end{equation}
Adding \dref{eq4} and \dref{eq5}, we can  get that 
\begin{equation}\label{eq6}
	\begin{aligned}
		&\quad\int_{-\infty}^{\infty}\begin{bmatrix}
			\epsilon(j\omega)\\ d(j\omega)
		\end{bmatrix}^*\begin{bmatrix}
			\epsilon(j\omega)\\ d(j\omega)
		\end{bmatrix} d\omega
		\\
		&\leq  \int_{-\infty}^{\infty}v^*(j\omega)\begin{bmatrix}
			\Omega_1 &\\
			& \Omega_2+\eta^2I
		\end{bmatrix}v(j\omega)d\omega+\frac{2\pi\mu}{\nu} 
		\\&\leq \gamma^2\int_{-\infty}^{\infty}v^*(j\omega)v(j\omega)d\omega+\frac{2\pi\mu}{\nu}.
	\end{aligned}
\end{equation}
Based on the definition of finite-gain $\mathcal{L}_2$ stability and the ${H}_{\infty}$ norm theory, obviously $\Gamma(\cdot)$ is finite-gain $\mathcal L_2$ stable  with  operator norm $\|\Gamma\|_{\infty}\leq\gamma$. $\hfill\square$
\end{proof}
\begin{remark}
	This theorem gives an explicit relationship between the sampling error $\epsilon$ and the sampled  output value   $v$. The sampling error is, in fact, an image of an affine mapping  acting on the sampled output.  Notice that this operator can be denoted as $\begin{bmatrix}
	\Phi(\cdot) \\  \Psi(\cdot) 
	\end{bmatrix}$ and is in general not a transfer function in $\mathcal{RH}_{\infty}$, but it is finite-gain $\mathcal L_2$ stable, i.e., it maps a causal $\mathcal L_2$ signal to another causal $\mathcal L_2$ signal under the event-triggered mechanism described above.
	%In view of this, we can further relax our assumption in the sense that  the uncertainty (perturbation) $\Delta(\cdot)$  does not even need to be a transfer matrix in $\mathcal {RH}_{\infty}$. All we need is that $\Delta(\cdot)$ is finite-gain $\mathcal L_2$ stable. This observation expands the extent to which our method of  analysis and synthesis  is applicable.  
\end{remark}
\begin{figure}
\begin{center}
\includegraphics[height=9cm,width=9cm]{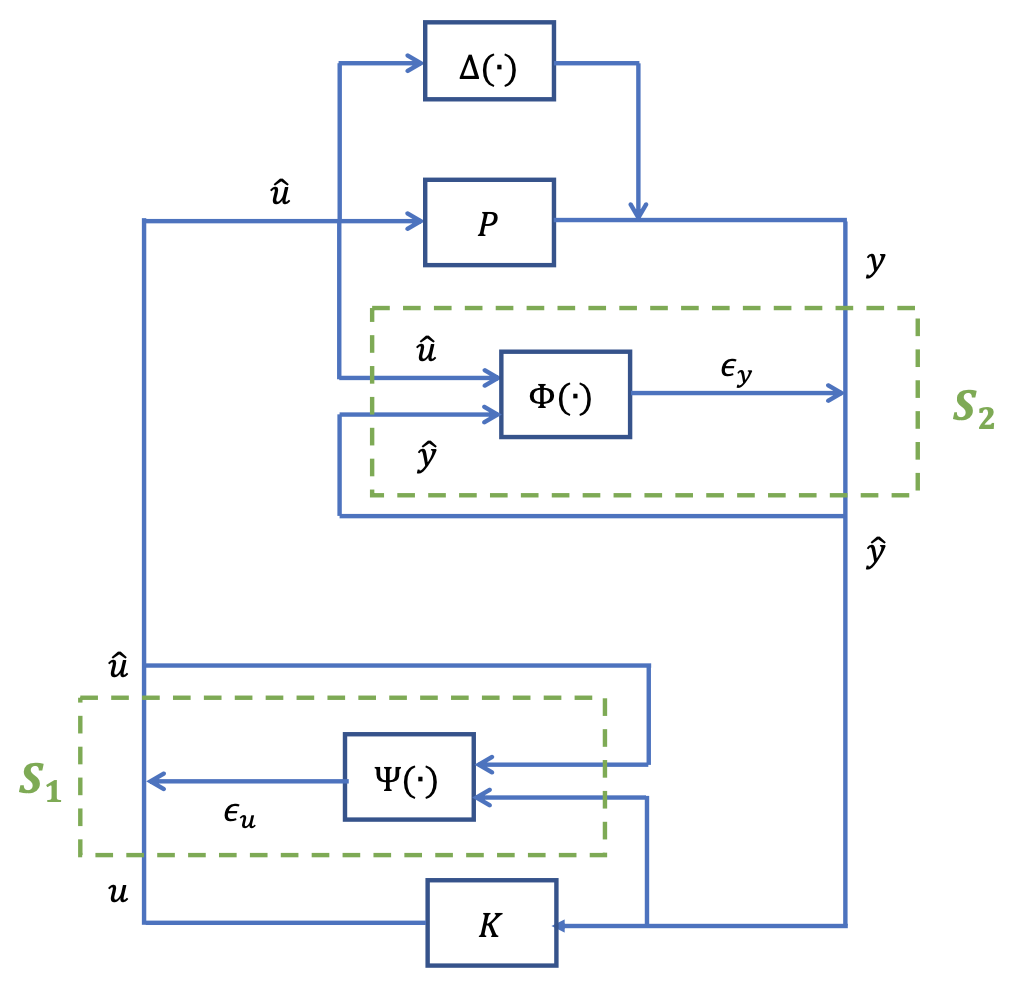}      
\caption{ Block diagram representation of the closed-loop system. The controlled plant is perturbed by an additive dynamic uncertainty and the effect of the two event-triggering sampler $S_2$ and $S_1$ is equivalent to introducing two uncertain affine operators $\Psi(\cdot)$ and $\Phi(\cdot)$ into the feedback loop that act on the sampled output $\hat y$ and $\hat u$ to generate the sampling errors $\epsilon_y$ and $\epsilon_u$, respectively. }  % width is 8.4 cm.
\label{fig2}                                 
\end{center}                                 
\end{figure}
\begin{remark}
	According  to Theorem \ref{thm1}, the system interconnection shown in Fig. \ref{fig1} is actually equivalent to the system interconnection shown in Fig. \ref{fig2}. The two event-triggered samplers $S_1$ and $S_2$   in Fig. \ref{fig1} are equivalent to the  two  block structure units contained in the two green boxes in  Fig. \ref{fig2}. To be more specific, the effect of the sampler $S_2$  can be seen as acting on the sampled output $\hat y$ and $\hat u$ to generate the sampling error $\epsilon_y$, while that of the other sampler $S_1$ is to generate $\epsilon_u$. 
	It should be noted that the explicit form of the operators $\Psi(\cdot)$ and $\Phi(\cdot)$ can be very messy and it is in general impossible to express them in  a closed form. However, this does not prevent from using the properties of the operators such as $\mathcal L_2$ stability, since the  key quantity we are  interested in is the size or norm of the operators, which will be deliberated in the sequal.
	% Notice that the size(norm) of the uncertain operator $ \Gamma(\cdot)$ is closed related to the matrix parameters in the triggering condition which reflect  the triggering frequency. To be more specific, if  
\end{remark}
Before we proceed to determine the controller parameters and the parameters in the event-triggered mechanism, we first do some transformations to the feedback loop to get some more insightful results. 
More specifically, we `pull out' the uncertain operator $\Gamma(\cdot)$. Notice that $\hat u= u+\epsilon_u$ and $\hat y= y+\epsilon_y$, we  can then derive that 
\begin{equation}\label{close}
	\begin{aligned}
		\begin{bmatrix}
			\dot x_p\\ \dot x_k  
		\end{bmatrix}&=A_{cl}\begin{bmatrix}
			x_p\\ x_k
		\end{bmatrix}
		 +B_{cl}\begin{bmatrix}
			\epsilon_y\\ \epsilon_u\\ d 
		\end{bmatrix},\\
		\begin{bmatrix}
			\hat y\\ \hat u 
		\end{bmatrix}&=C_{cl}\begin{bmatrix}
			x_p\\ x_k
		\end{bmatrix}+D_{cl}\begin{bmatrix}
			\epsilon_y\\ \epsilon_u\\ d 
		\end{bmatrix},
	\end{aligned}
\end{equation}
where $A_{cl}=\begin{bmatrix}
			A+BD_kC & BC_k\\
			B_k C & A_k
		\end{bmatrix}$, $B_{cl}=\begin{bmatrix}
			BD_k & B & BD_k\\
			B_k & 0 & B_k 
		\end{bmatrix}$, $C_{cl}=\begin{bmatrix}
			C & 0 \\ D_kC  & C_k
		\end{bmatrix}$, $D_{cl}=\begin{bmatrix}
			I & 0 & I\\
			D_k & I & D_k
		\end{bmatrix}$.

Denote the system in \dref{close} as $v(s)= M(s)\begin{bmatrix}
	\epsilon(s)\\ d(s)
\end{bmatrix}$, where $M(s)=\left[\begin{smallmatrix} 
\begin{array}{c|c}
	A_{cl}  & B_{cl} \\ \hline 
	C_{cl} &  D_{cl}
\end{array}
\end{smallmatrix}\right]$. Combining with the fact that $\begin{bmatrix}
	\epsilon(s)\\ d(s)
\end{bmatrix}=\Gamma(v(s))$,  we then get a classical feedback loop interconnected by a linear system $M(s)$ and a norm-bounded $\mathcal L_2$ stable but unknown operator $\Gamma(\cdot)$.

Now we turn back to the major task of  designing a stabilizing controller $K(s)=\left[\begin{smallmatrix} 
\begin{array}{c|c}
	A_{k}  & B_{k} \\ \hline 
	C_{k} &  D_{k}
\end{array}
\end{smallmatrix}\right].$ In light of the well-known Small Gain Theorem (Lemma \ref{smallgain}), the problem can then be transformed to designing a controller $K(s)$ such that $M(s)\in \mathcal{RH}_{\infty}$ and $\|M(s)\|_{\infty}<\frac{1}{\gamma}$. This problem can then be coped with  by solving a standard $H_{\infty}$ problem. To see this, consider the following linear system:    $$\begin{bmatrix}
	z(s)\\ y(s)
\end{bmatrix}=P_{add}(s)\begin{bmatrix}
	w(s)\\ u(s)
\end{bmatrix},$$
where $P_{add}(s)=\left[\begin{smallmatrix}
     	\begin{array}{c|cc}
	A  & B_1 & B_2 \\ \hline 
	C_1 &  D_{11} & D_{12}\\
	C_2 & D_{21} & D_{22}
\end{array}
     \end{smallmatrix}\right]$, $B_1=\begin{bmatrix}
     	0 & B& 0 
     \end{bmatrix}$, $B_2=B$, $C_1=\begin{bmatrix}
     	C\\ 0
     \end{bmatrix}$, $C_2=C$, $D_{11}=\begin{bmatrix}
     	I & 0 & I \\ 0 & I & 0
     \end{bmatrix}$, $D_{12}=\begin{bmatrix}
     	0\\ I
     \end{bmatrix}$, $D_{21}=\begin{bmatrix}
     	I & 0 & I
     \end{bmatrix}$ and $D_{22}=0$. It is not difficult to  verify that  
$$T_{wz}=\mathcal F_l(P_0(s),K(s))=\left[\begin{smallmatrix} 
\begin{array}{c|c}
	A_{cl}  & B_{cl} \\ \hline 
	C_{cl} &  D_{cl}
\end{array}
\end{smallmatrix}\right]=M(s).$$ Therefore,  designing a controller $K(s)$ such that $M(s)$ is stable and $\|M(s)\|_{\infty}<\frac{1}{\gamma}$ is equivalent to designing such a $K(s)$ such   that $T_{wz}(s)=\mathcal F_l(P_{add}(s),K(s))$ is stable and $\|T_{wz}\|_{\infty}=\|\mathcal F_l(P_{add}(s),K(s))\|_{\infty}<\frac{1}{\gamma}$. This result can be concluded in the following theorem: 
\begin{thm}\label{thm2}
     Assume that $\gamma>0$, an event-triggered controller with sampling mechanism shown in Section \ref{mechanism}, triggering condition shown as  \dref{triggercondition} and    $K(s)=\left[\begin{smallmatrix} 
\begin{array}{c|c}
	A_{k}  & B_{k} \\ \hline 
	C_{k} &  D_{k}
\end{array}
\end{smallmatrix}\right]$ stabilizes any perturbed plant $P_{\Delta}(s)=P_0(s)+\Delta(s)$ with $\|\Delta\|_{\infty}\leq\eta$,  if and only if $K(s)$ solves   a standard dynamic output feedback $\gamma^{-1}$-suboptimal $H_{\infty}$ synthesis problem with $P_{add}(s)$ defined as above and $\gamma$ determined by Theorem \ref{thm1}. 
\end{thm}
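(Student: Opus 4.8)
The plan is to establish the equivalence in Theorem \ref{thm2} by chaining together the two reductions already set up in the text: first, the small-gain reduction from robust stabilization of $P_\Delta(s)$ to an $H_\infty$-norm bound on the nominal closed loop $M(s)$; and second, the algebraic identification $M(s)=\mathcal F_l(P_{add}(s),K(s))$. For the "if" direction, I would suppose $K(s)$ is a $\gamma^{-1}$-suboptimal solution of the $H_\infty$ problem for $P_{add}(s)$, so $T_{wz}(s)=\mathcal F_l(P_{add}(s),K(s))$ is stable with $\|T_{wz}\|_\infty<\gamma^{-1}$. By the state-space computation already carried out, $T_{wz}(s)=M(s)=\left[\begin{smallmatrix}\begin{array}{c|c}A_{cl}&B_{cl}\\\hline C_{cl}&D_{cl}\end{array}\end{smallmatrix}\right]$, hence $M(s)\in\mathcal{RH}_\infty$ and $\|M\|_\infty<\gamma^{-1}$. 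By Theorem \ref{thm1}, the composite operator $\Gamma(\cdot)$ mapping $v$ to $[\epsilon^T,d^T]^T$ is finite-gain $\mathcal L_2$-stable with $\|\Gamma\|_\infty\le\gamma$, and the interconnection of Fig. \ref{fig2} is exactly the feedback loop $v(s)=M(s)[\epsilon(s)^T,d(s)^T]^T$, $[\epsilon(s)^T,d(s)^T]^T=\Gamma(v(s))$. Then the Small Gain Theorem (Lemma \ref{smallgain}) with $\gamma_1=\|M\|_\infty<\gamma^{-1}$ and $\gamma_2=\|\Gamma\|_\infty\le\gamma$ gives $\gamma_1\gamma_2<1$, so the closed loop is internally stable; since this holds for every admissible $\Delta$ with $\|\Delta\|_\infty\le\eta$ (the bound $\gamma$ being uniform over such $\Delta$), the event-triggered controller robustly stabilizes the family $P_\Delta(s)=P_0(s)+\Delta(s)$.

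For the "only if" direction I would argue by contraposition: if $K(s)$ does not solve the $\gamma^{-1}$-suboptimal $H_\infty$ problem, then either $M(s)=\mathcal F_l(P_{add}(s),K(s))$ is not internally stable, or it is stable but $\|M\|_\infty\ge\gamma^{-1}$. In the first case the nominal loop ($\Delta=0$, which is itself an admissible perturbation since $0\in\mathcal{RH}_\infty$ and $\|0\|_\infty\le\eta$) already fails to be internally stable, so robust stabilization fails. In the second case, the standard converse to the small gain theorem applies: since $M(s)$ is a genuine transfer matrix with $\|M\|_\infty\ge\gamma^{-1}$, one can exhibit an admissible destabilizing $\Gamma$ of gain $\le\gamma$ — here one would need to realize such a $\Gamma$ within the class of operators actually achievable by the event-triggering mechanism of Section \ref{mechanism}, e.g. by choosing the triggering parameters $\Omega_1,\Omega_2$ near the boundary $\rho(\Omega_1)=\rho(\Omega_2+\eta^2 I)=\gamma^2$ together with an appropriate $\Delta$, so that the induced operator $\Gamma$ attains (or approaches) gain $\gamma$ at the critical frequency and destabilizes the loop.

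I expect the converse direction to be the main obstacle. The forward (sufficiency) direction is essentially a bookkeeping exercise once Theorem \ref{thm1}, the identity $M=\mathcal F_l(P_{add},K)$, and Lemma \ref{smallgain} are in hand. The difficulty in necessity is that the small gain theorem's converse is ordinarily proved by constructing an arbitrary norm-bounded linear uncertainty, whereas here the admissible "uncertainties" $\Gamma(\cdot)$ are the specific affine operators generated by the sampling dynamics plus the structured block-diagonal $\Delta$; one must check that this restricted class is still rich enough to realize a worst-case $\Gamma$ of gain $\gamma$ (or gains arbitrarily close to $\gamma$), so that stability for all of them forces $\|M\|_\infty<\gamma^{-1}$. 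A clean way around this, which I would adopt if a fully rigorous converse is unwieldy, is to state the "only if" part as: if $K(s)$ does \emph{not} render $M(s)$ stable with $\|M\|_\infty<\gamma^{-1}$, then the proposed sufficient condition of Theorem \ref{thm1}–Lemma \ref{smallgain} can no longer guarantee robust stability — i.e., the equivalence is between solving the $H_\infty$ problem and the small-gain certificate being available — which is the sense in which the theorem is most naturally true and is what the subsequent algorithmic sections rely on.

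Finally, I would close by noting that internal stability of the $M$–$\Gamma$ loop rules out Zeno behavior: since all internal signals, in particular $\hat y,\hat u$, are in $\mathcal L_2$ and the exponential term $\mu e^{-\nu t}$ keeps the right-hand side of \dref{triggercondition} strictly positive on any bounded interval while the sampling error $\epsilon$ is reset to zero at each trigger, consecutive triggering instants are separated by a positive dwell time, so the closed loop is Zeno-free. This last remark connects the $H_\infty$ synthesis result back to the full control objective stated in Section \ref{s3}.
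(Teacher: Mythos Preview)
Your sufficiency (``if'') argument is exactly the paper's: the text preceding Theorem~\ref{thm2} does nothing more than (i) verify the algebraic identity $M(s)=\mathcal F_l(P_{add}(s),K(s))$, (ii) invoke Theorem~\ref{thm1} to bound $\|\Gamma\|_\infty\le\gamma$, and (iii) apply Lemma~\ref{smallgain}. So on that half your proposal is correct and coincides with the paper.

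On necessity you actually go further than the paper does, and your hesitation is well placed. The paper's argument establishes only the equivalence ``$M\in\mathcal{RH}_\infty$ with $\|M\|_\infty<\gamma^{-1}$'' $\Longleftrightarrow$ ``$K$ solves the $\gamma^{-1}$-suboptimal $H_\infty$ problem for $P_{add}$'', together with the one-sided small-gain implication from that condition to robust stabilization. No converse-small-gain construction is offered, and in particular the paper never shows that a destabilizing $\Gamma$ of gain $\gamma$ can be realized by the event-triggered samplers plus some admissible $\Delta$. Your contraposition sketch correctly isolates this as the genuine obstacle; your fallback reading---that the ``only if'' should be understood as equivalence between the $H_\infty$ condition and the availability of the small-gain certificate, rather than literal necessity for robust stabilization---is in fact the sense in which the paper's derivation supports the theorem, and is all that Algorithm~\ref{alg1} and the later sections require. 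You may simply adopt that reading and note the stronger literal converse is not claimed with proof.

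Two minor points. First, your Zeno paragraph does not belong here: the paper separates Zeno-freeness into Theorem~\ref{thm3} with its own contradiction argument, and Theorem~\ref{thm2} concerns only internal stability. Second, your Zeno sketch (``$\mathcal L_2$ signals plus $\mu e^{-\nu t}>0$ imply positive dwell time'') is too loose to stand on its own---the paper's proof needs the explicit ODE \dref{dote} for $\epsilon$ and a quantitative lower bound on inter-event times---so if you do want to mention Zeno, defer to Theorem~\ref{thm3} rather than improvise.
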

 
\begin{remark}
Actually, we can further relax our assumption in the sense that  the uncertainty  $\Delta(s)$  does not even need to be a transfer matrix in $\mathcal {RH}_{\infty}$. All we need is that $\Delta(s)$ is finite-gain $\mathcal L_2$ stable. This observation extends the field of application of the developed  method provided above.
%Notice  that  given $P$ and $\Delta$, the infimum  value of $\gamma$, denoted by $\gamma_{inf}$, is exactly $\eta$. Therefore, to guarantee that the stabilizing controller designed in our method  always  exists, it is necessary to make sure that the standard $\gamma^{-1}$-suboptimal $H_{\infty}$ synthesis problem in Theorem \ref{thm2} has a solution when $\gamma=\gamma_{inf}=\eta$.
\end{remark}

In light of the above  theorem,  we can then design the  event-triggered controller  stabilizing the perturbed system $P_{\Delta}$. The corresponding  parameters $A_k,B_k,C_k,D_k,\Omega_1>0,\Omega_2>0$, $\mu>0$ and $\nu>0$ in the event-triggered controller  can  be determined  via the following algorithm.

\begin{alg}\label{alg1}
	An algorithm to find a robust event-triggered   protocol for $P_{\Delta}=P_0+\Delta$:
	
	Step 1: Solve the standard  $H_{\infty}$ optimal synthesis problem  in Theorem \ref{thm2} and find the optimal $H_{\infty}$ level $\gamma_{opt}$. If $\gamma_{opt}<\eta^{-1}$, go to the next step. Otherwise, the robust event-triggered controller  may not exist.     
	
	 Step 2: Choose positive numbers $\mu>0$, $\nu>0$. Determine positive-definite matrices $0<\Omega_1<\gamma_{opt}^{-2}I$ and $0<\Omega_2<(\gamma_{opt}^{-2}-\eta^2)I$ and calculate $\gamma=\sqrt{\max\{\rho(\Omega_1),\rho(\Omega_2+\eta^2I)\}}$.
	 
	 Step 3: Solve the standard $\gamma^{-1}$-suboptimal $H_{\infty}$ synthesis problem  in Theorem \ref{thm2} to get the controller  $K(s)=\left[\begin{smallmatrix} 
\begin{array}{c|c}
	A_{k}  & B_{k} \\ \hline 
	C_{k} &  D_{k}
\end{array}
\end{smallmatrix}\right]$.

\end{alg}  

\begin{remark}
	Note that this algorithm eventually transforms the event-triggered robust stabilizing controller synthesis problem to solving a standard $H_{\infty}$ suboptimal control problem with the same order as the original nominal linear plant and that  the parameters in the $H_{\infty}$ suboptimal problem all depend on the already-known parameters of the  nominal plant.  The suboptimal $H_{\infty}$ level to be achieved is determined by the parameters in the triggering mechanism which remains to be developed. 
\end{remark}

The next theorem shows that the closed-loop system does not exhibit Zeno behavior.  
\begin{thm}\label{thm3}
	The system interconnection resulted by  the event-triggering sampling mechanism designed in Section \ref{mechanism} and the control protocol designed in  Algorithm \ref{alg1} does not exhibit Zeno behavior.
\end{thm}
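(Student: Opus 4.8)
\emph{Proof plan.} I would use the standard route for excluding Zeno behaviour: show that on every finite horizon $[0,T]$ only finitely many triggering instants can occur, so that $\{t_k\}$ has no finite accumulation point. The only nontrivial ingredient is a uniform (on $[0,T]$) upper bound on the growth rate of the sampling error $\epsilon=[\epsilon_y^T,\epsilon_u^T]^T$, which I get from the fact that between two triggers the whole loop is governed by one fixed linear ODE and that a trigger never enlarges the state norm.

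First I would collect the ``true'' states into one vector $\zeta=[x_p^T,x_k^T,\xi^T]^T$, which is continuous (only $\hat y$, $\hat u$ and $\epsilon$ change at a trigger). On an inter-event interval $[t_k,t_{k+1})$ the held inputs $\hat y,\hat u$ are constant, so $\dot\zeta$ is an explicit linear function of $(\zeta,\epsilon)$ by \dref{eq1}, \dref{eq2}, \dref{eq3}; differentiating $\epsilon_y=\hat y-y$, $\epsilon_u=\hat u-u$ shows $\dot\epsilon=-\tfrac{d}{dt}[y^T,u^T]^T$ is also an explicit linear function of $(\zeta,\epsilon)$ — no algebraic loop arises because held signals are not differentiated. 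Hence there is a single fixed matrix $\mathcal M$ (depending only on the realizations of $P_0$, $K$ and $\Delta$) with $\tfrac{d}{dt}[\zeta(t)^T,\epsilon(t)^T]^T=\mathcal M[\zeta(t)^T,\epsilon(t)^T]^T$ on each $[t_k,t_{k+1})$, while at $t_k$ the vector $\zeta$ is continuous and $\epsilon$ is reset to $0$, which can only decrease the norm of $[\zeta^T,\epsilon^T]^T$. A Gr\"onwall-type estimate then gives $\|[\zeta(t)^T,\epsilon(t)^T]^T\|\le\|[\zeta(0)^T,\epsilon(0)^T]^T\|\,e^{\|\mathcal M\|T}=:\Theta_T$ for all $t\in[0,T]$, \emph{regardless of how many events fall in $[0,T]$}. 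In particular $\|\dot\epsilon(t)\|\le\|\mathcal M\|\Theta_T=:L_T$ on $[0,T]$, so from $\epsilon(t_k)=0$ one gets $\|\epsilon(t)\|\le L_T(t-t_k)$ for $t\in[t_k,t_{k+1})\subseteq[0,T]$.

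Next I would bring in the triggering condition \dref{triggercondition}. Immediately after $t_k$ we have $\epsilon=0$, hence $f(t_k^+)\le-\mu e^{-\nu t_k}<0$ because $\mu>0$; since $f$ is continuous on $(t_k,t_{k+1})$, the next trigger $t_{k+1}$ is its first zero, so $f(t_{k+1})=0$, i.e.
$$\|\epsilon(t_{k+1})\|^2=\begin{bmatrix}\hat y\\ \hat u\end{bmatrix}^T\!\!\begin{bmatrix}\Omega_1&\\&\Omega_2\end{bmatrix}\!\begin{bmatrix}\hat y\\ \hat u\end{bmatrix}+\mu e^{-\nu t_{k+1}}\ \ge\ \mu e^{-\nu t_{k+1}}\ \ge\ \mu e^{-\nu T},$$
the nonnegative quadratic term being dropped (retaining it would only lengthen the inter-event interval). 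Combining with $\|\epsilon(t_{k+1})\|\le L_T(t_{k+1}-t_k)$ yields $t_{k+1}-t_k\ge \sqrt{\mu}\,e^{-\nu T/2}/L_T=:\tau_T>0$. Hence at most $\lceil T/\tau_T\rceil$ triggers occur in $[0,T]$; as $T$ is arbitrary, $\{t_k\}$ has no finite accumulation point, which is precisely the absence of Zeno behaviour.

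The main obstacle is Step two, the uniform-on-$[0,T]$ bound $\Theta_T$: one cannot merely say ``$\epsilon$ is continuous on a compact set hence bounded'', since $\epsilon$ jumps and the number of events in $[0,T]$ is a priori unknown. The resolution is to exploit that (i) between events the full state obeys one fixed linear ODE and (ii) the resets never increase its norm, so the Gr\"onwall estimate is insensitive to the event count. (Note $\tau_T$ degrades as $T\to\infty$, which is unavoidable with the decaying threshold $\mu e^{-\nu t}$; internal stability from Theorem~\ref{thm2} could sharpen this, but is not needed to rule out Zeno behaviour.)
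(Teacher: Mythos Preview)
Your argument is correct, and the overall strategy---bound the growth of $\epsilon$ between events and compare it with the strictly positive floor $\mu e^{-\nu t}$---matches the paper's, but the execution differs in one substantive way. The paper first invokes internal stability (Theorem~\ref{thm2}) to obtain a \emph{global} bound $\|\zeta(t)\|\le H$, then argues by contradiction near a putative accumulation point $T$, estimating $\epsilon$ through the variation-of-constants formula and Cauchy--Schwarz. You instead fix an arbitrary finite horizon $[0,T]$, run a Gr\"onwall estimate on the combined state $[\zeta^T,\epsilon^T]^T$ (using that resets never enlarge its norm), and obtain directly a horizon-dependent minimum inter-event time $\tau_T>0$. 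The upshot is that your Zeno exclusion is logically independent of Theorem~\ref{thm2}: you never use that the loop is stable, only that between events it obeys one fixed linear ODE. The paper's route, by contrast, yields constants in terms of the global state bound $H$ rather than the exponentially growing $\Theta_T$, at the cost of coupling the Zeno argument to the stability result. Both proofs tacitly require $D_\Delta=0$ (the paper states this explicitly; your ``no algebraic loop'' claim needs it as well, since otherwise $y$ and $\hat u$ are algebraically coupled through $d=C_\Delta\xi+D_\Delta\hat u$).
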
    
\begin{proof}
    We will exclude Zeno behavior by contradiction. For simplicity, we assume that $D_{\Delta}=0$ in $\Delta(s)$.
    
    First, since the closed-loop system is internally stable, then we have that $\zeta=\begin{bmatrix}
    	x_p^T  & x_k^T & \xi^T
    \end{bmatrix}^T$ is a vector with a norm bound, say $H$. During each time interval between the two consecutive triggering instants, say, $[t_k,t_{k+1})$, taking the time derivative of the vector $\zeta$, we will get that
    \begin{equation}\label{dote}
    	\begin{aligned}
    		\dot\epsilon=\Hat A \epsilon + \hat B\zeta,
    	\end{aligned}
    \end{equation} 
    where $$\begin{aligned} \hat A &= \begin{bmatrix}
    	-CBD_k-C_{\Delta}B_{\Delta}D_k & -CB-C_{\Delta}B_{\Delta}\\
    	-C_kB_k & 0
    \end{bmatrix},\\
    	 \hat  B &= 
    	\begin{bmatrix}
    	 \hat B_{11} & \hat B_{12} & \hat B_{13}\\
    	 \hat B_{21} & \hat B_{22} & \hat B_{23}
    		    	\end{bmatrix}, \\
    \hat B_{11}&=-CA-CBD_kC-C_{\Delta}B_{\Delta}D_kC,\\
    \hat  B_{12}&= -CBC_k-C_{\Delta}B_{\Delta}C_k,\\
    \hat B_{13}& = -CBD_kC_{\Delta}-C_{\Delta}B_{\Delta}D_kC_{\Delta}-C_{\Delta}A_{\Delta},\\
    \hat B_{21}&= -C_kB_kC,\\
    \hat B_{22}& = -C_kA_k ,\\
     \hat B_{23}& = -C_kB_kC_{\Delta}.\end{aligned}$$ 
  Suppose that there exists Zeno behavior. Then there is a time instant $T$ such that $\lim_{k\rightarrow\infty}t_{k}=T<\infty$. This is equivalent to saying that for a small positive real number $\delta=\frac{\mu\|\hat A\|}{\|\hat B\|^2H^2}e^{-(4\|\hat A\|+\nu)T}$, there exists such an integer $K$ such that for all $k \geq K$, we have $t_k\in (T-\delta,T] $. Then we will consider specially the time interval $[t_K,t_{K+1})$.  
	Notice that at the triggering instant $t_K$, $\epsilon_y$ and $\epsilon_u$ are reset to be zero. Since the next triggering time is the first time that $\left\|\epsilon \right\|^2$ reaches $ v^T
\left[\begin{smallmatrix}
	\Omega_1 & \\ & \Omega_2
\end{smallmatrix}\right]
v+\mu e^{-\nu t},$ 
there must exist some time instant $T_0\in [t_K,t_{K+1})$ when $\left\|
	\epsilon(T_0) 
	\right\|^2=\mu e^{-\nu T_0}.$
On the other hand,
$$ 
\begin{aligned}
\left\|
\epsilon(T_0)\right\|^2& =\left\|\int_{t_K}^{T_0}e^{\hat A(T_0-\tau)}\hat B \zeta(\tau)d \tau\right\|^2\\
&\leq \left\|e^{\hat A(T_0-t_K)}\right\|^2\left\|\int_{t_K}^{T_0}e^{-\hat A(\tau-t_K)}\hat B\zeta(\tau)d\tau\right\|^2\\
& \leq e^{2\|\hat A\|T}\left[\int_{t_K}^{T_0}\left\|e^{-\hat A(\tau-t_K)}\right\|\left\|\hat B\zeta(\tau)\right\|d\tau\right]^2.
\end{aligned}
$$
According to the well-known Cauthy-Swartz inequality, it then follows that 
$$
\begin{aligned}
	&\left\|\epsilon(T_0)
\right\|^2 \\
& \leq
e^{2\|\hat A\|T}\left[\int_{t_K}^{T_0}e^{2\|\hat A\|(\tau-t_K)}d\tau\right]\left[\int_{t_K}^{T_0}\|\hat B\|^2\|\zeta(\tau)\|^2d\tau \right]\\
& \leq e^{2\|\hat A\|T}\frac{1}{2\|\hat A\|}e^{2\|\hat A\|(T_0-t_K)}\|\hat B\|^2H^2(T_0-t_K)\\
& \leq e^{2\|\hat A\|T}\frac{1}{2\|\hat A\|} e^{2\|\hat A\|T}\|\hat B\|^2H^2(T_0-t_K).\\
\end{aligned}
$$
Notice also that $\mu e^{-\nu T_0}\geq\mu e^{-\nu T}$, therefore we have 
$$T_0-t_K\geq\frac{2\mu\|\hat A\|}{\|\hat B\|^2H^2}e^{-(4\|\hat A\|+\nu)T}=2\delta.$$
This implies that 
$$t_{K+1}-t_K> \delta\geq T-t_K,$$
which contradicts our assumption. This completes the proof. $\hfill\square$
\end{proof}

\subsection{Linear Systems with  Multiplicative Uncertainties }\label{s5}
In this subsection, we consider a linear nominal plant  perturbed by a multiplicative uncertainty. Therefore, the perturbed plant has the form  $P_{\Delta}(s)=(I+\Delta(s))P_0(s)$ with $\Delta(s)\in \mathcal{RH}_{\infty}$ and $\|\Delta(s)\|_{\infty}\leq\eta$.  
In light of this, we have 
\begin{equation}
	\begin{aligned}\label{P}
		\dot x_p &= A x_p+B\hat u, \\
		\lambda  & = C x_p,
	\end{aligned}
\end{equation}
\begin{equation}\label{Delta}
	\begin{aligned}
		\dot \xi &= A_{\Delta}\xi +B_{\Delta}\lambda,\\
		d  & =  C_{\Delta}\xi,
	\end{aligned}
\end{equation}
\begin{equation}\label{111}
	y=\lambda+d,
\end{equation}
and
\begin{equation}\label{K}
	\begin{aligned}
		\dot x_k & = A_k x_k+B_k \hat y,\\
		u &  = C_k x_k+D_k\hat y,
	\end{aligned}
\end{equation}
where $\hat y$ and $\hat u$ denote the sampled value of $y$ and $u$, respectively.
The triggering mechanism is designed to be  the same as the  one shown in Section \ref{mechanism} and we still adopt the static triggering function \dref{triggercondition}.

Let $\epsilon_y=\hat y-y$, $\epsilon_u = \hat u-u$, $\tilde v=\begin{bmatrix}
	\lambda^T & \hat y^T & \hat u^T 
\end{bmatrix}^T$ and  $\epsilon = \begin{bmatrix}
	\epsilon_y^T & \epsilon_u^T
\end{bmatrix}^T$. 
Similar to the case where the system is subject to an additive uncertainty, in this case by studying the relationship between the vector $\begin{bmatrix}
	d^T & \epsilon^T  
\end{bmatrix}^T$   and the vector $\tilde v$, we have the following result.
\begin{thm}\label{thm4}
	  $$\begin{bmatrix}
	d(s) \\ \epsilon(s)  
\end{bmatrix}=\hat\Gamma(s)\tilde v(s),$$ where $\hat\Gamma(\cdot)$ is an affine operator which is  finite-gain $\mathcal L_2$ stable. Meanwhile, the operator gain $\|\Gamma\|_{\infty}\leq\hat\gamma$, where $\hat\gamma=\sqrt{\max\{\eta^2,\rho(\Omega_1),\rho(\Omega_2)\}}$. 
\end{thm}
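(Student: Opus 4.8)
The plan is to mirror the proof of Theorem \ref{thm1}. The idea is to decompose the interconnection \dref{P}--\dref{K} into three pieces — the perturbation output $d$, the output sampling error $\epsilon_y$, and the input sampling error $\epsilon_u$ — and to show that each of them is the image of an affine, finite-gain $\mathcal L_2$-stable operator acting on (a sub-block of) $\tilde v=[\lambda^T,\hat y^T,\hat u^T]^T$ in the time domain. Lemma \ref{l2} then carries these operators into the frequency domain, and stacking the three pieces produces the affine operator $\hat\Gamma(\cdot)$. The operator norm $\hat\gamma$ will finally be read off from the triggering condition \dref{triggercondition} together with $\|\Delta\|_\infty\le\eta$, using Parseval's identity.

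For the structural part I would argue as follows. Since $d(s)=\Delta(s)\lambda(s)$ from \dref{Delta}, Lemma \ref{l2} gives $d(t)=\delta(\lambda(t))$ for a linear, finite-gain $\mathcal L_2$-stable time-domain operator $\delta(\cdot)$; in particular $d$ is an affine operator of $\tilde v$. For $\epsilon_y$, substituting $y=\lambda+d$ from \dref{111} yields, on any inter-event interval $[t_k,t_{k+1})$,
$$\epsilon_y=\hat y-y=\hat y-\lambda-d=\hat y-\lambda-\delta(\lambda),$$
which is an affine operator of $(\lambda,\hat y)$ and hence of $\tilde v$ — note that this step is actually cleaner than in the additive case, because $\lambda$ now appears directly inside $\tilde v$, so no unrolling of the plant state over $[t_k,t]$ is needed. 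For $\epsilon_u$, solving the controller dynamics \dref{K} on $[t_k,t_{k+1})$ with $\hat y$ held constant as the forcing signal gives, exactly as in the proof of Theorem \ref{thm1},
$$\epsilon_u=\hat u-D_k\hat y-C_ke^{A_k(t-t_k)}x_k(t_k)-\int_{t_k}^{t}C_ke^{A_k(t-\tau)}B_k\hat y(\tau)\,d\tau,$$
which is an affine operator of $(\hat y,\hat u)$ and hence of $\tilde v$. Applying Lemma \ref{l2} componentwise and stacking, one obtains $[d(s)^T,\epsilon(s)^T]^T=\hat\Gamma(\tilde v(s))$ with $\hat\Gamma(\cdot)$ affine.

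For the gain bound I would reproduce the Parseval computation of Theorem \ref{thm1}. Since $d(s)=\Delta(s)\lambda(s)$ and $\|\Delta\|_\infty\le\eta$, one has $\int_{-\infty}^{\infty}d^*(j\omega)d(j\omega)\,d\omega\le\eta^2\int_{-\infty}^{\infty}\lambda^*(j\omega)\lambda(j\omega)\,d\omega$; integrating the triggering condition \dref{triggercondition} over $[0,\infty)$ and applying Parseval gives $\int_{-\infty}^{\infty}\epsilon^*\epsilon\,d\omega\le\int_{-\infty}^{\infty}[\hat y^*,\hat u^*]\,\mathrm{diag}(\Omega_1,\Omega_2)\,[\hat y^T,\hat u^T]^T\,d\omega+\frac{2\pi\mu}{\nu}$. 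Adding the two inequalities and bounding the block-diagonal quadratic form by its largest eigenvalue,
$$\int_{-\infty}^{\infty}\begin{bmatrix}d\\ \epsilon\end{bmatrix}^{*}\begin{bmatrix}d\\ \epsilon\end{bmatrix}d\omega\le\int_{-\infty}^{\infty}\tilde v^{*}\,\mathrm{diag}(\eta^{2}I,\Omega_1,\Omega_2)\,\tilde v\,d\omega+\frac{2\pi\mu}{\nu}\le\hat\gamma^{2}\int_{-\infty}^{\infty}\tilde v^{*}\tilde v\,d\omega+\frac{2\pi\mu}{\nu},$$
with $\hat\gamma=\sqrt{\max\{\eta^{2},\rho(\Omega_1),\rho(\Omega_2)\}}$, so that $\hat\Gamma(\cdot)$ is finite-gain $\mathcal L_2$-stable with $\|\hat\Gamma\|_\infty\le\hat\gamma$. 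Note that, unlike the additive case, $\eta^2$ here sits in its own diagonal block rather than being added to $\Omega_2$, because $d$ is now driven by $\lambda$ and not by $\hat u$.

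The hard part is the same mild subtlety as in Theorem \ref{thm1}: the term $C_ke^{A_k(t-t_k)}x_k(t_k)$ has to be read as an affine operator of $\hat y$, yet $t_k$ is the last triggering instant before $t$, so this map is time-varying, is not a convolution, and a priori only acts on the extended space $\mathcal L_{2e}$. What makes the theorem go through is that the quantitative conclusion — the $\mathcal L_2$ gain — does not use the explicit form of the operators at all; it follows solely from the triggering inequality \dref{triggercondition} and the bound on $\Delta$. Hence $\hat\Gamma(\cdot)$ is understood in the same structural sense as $\Gamma(\cdot)$ in Theorem \ref{thm1}, and the only standing hypothesis needed for the frequency-domain manipulations is that the closed loop keeps $\tilde v$ — and therefore every internal signal — in $\mathcal L_2$, the statement being a conditional characterization of the two samplers.
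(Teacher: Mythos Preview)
Your proposal is correct and follows exactly the route the paper intends: the paper's own proof of Theorem~\ref{thm4} simply states that it is ``quite similar to that of Theorem~\ref{thm1}'' and omits the details, so mirroring that argument---with the adjustments you note ($d$ driven by $\lambda$ rather than $\hat u$, hence $\eta^2$ occupying its own diagonal block and yielding $\hat\gamma=\sqrt{\max\{\eta^2,\rho(\Omega_1),\rho(\Omega_2)\}}$)---is precisely what is required. Your observation that the $\epsilon_y$ step is actually cleaner here because $\lambda$ already sits inside $\tilde v$ is a nice bonus that the paper does not spell out.
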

\begin{proof}
	The proof is quite similar to that of  Theorem \ref{thm1} and thus is omitted for brevity. $\hfill\square$

\end{proof}

On the other hand, combining \dref{P}, \dref{Delta}, \dref{111} and \dref{K}, we can get that
\begin{equation}\label{close2}
\begin{aligned}
	\begin{bmatrix}
		\dot x_p\\ \dot x_k
	\end{bmatrix}&=\hat A_{cl}\begin{bmatrix}
		x_p\\ x_k
	\end{bmatrix}+\hat B_{cl}\begin{bmatrix}
		d\\ \epsilon
	\end{bmatrix},\\
	\tilde v&=\hat C_{cl}\begin{bmatrix}
		x_p\\ x_k
	\end{bmatrix}+\hat D_{cl}\begin{bmatrix}
		d\\ \epsilon
	\end{bmatrix},
\end{aligned}	
\end{equation}
where  
$$\begin{aligned}
\hat A_{cl}&=\begin{bmatrix}
	A+BD_kC & BC_k\\
	B_kC & A_k
\end{bmatrix},\\
\hat B_{cl}&=\begin{bmatrix}
	BD_k& BD_k & B\\
	B_k & B_k & 0
\end{bmatrix},\\
\hat C_{cl} &=\begin{bmatrix}
		C & 0 \\ C & 0 \\ D_kC  & C_k
	\end{bmatrix},\\
	\hat D_{cl}&=\begin{bmatrix}
		0&0&0\\
		I&I&0\\
		D_k&D_k& 0
	\end{bmatrix}.\end{aligned} $$
 
 Let $\hat M(s)=\left[\begin{smallmatrix} 
\begin{array}{c|c}
	\hat A_{cl}  & \hat  B_{cl} \\ \hline 
	\hat C_{cl} & \hat D_{cl}
\end{array}
\end{smallmatrix}\right]$, we then have $\tilde v(s)=\hat M(s)\begin{bmatrix}
	d(s)\\ \epsilon(s)
\end{bmatrix}.$
Combining with the fact that $\begin{bmatrix}
	d(s) \\ \epsilon(s)  
\end{bmatrix}=\hat\Gamma(v(s))$, we can then get an interconnecting loop consisting of $M(s)$ and $\hat\Gamma(\cdot)$. By the Small Gain Theorem, it is clear that the event-triggered protocol stabilizes the perturbed linear system, if $\hat M(s)\in \mathcal{RH}_{\infty}$ and $\|\hat M(s)\|_{\infty}<\hat \gamma^{-1}$. This design objective, again, can be transformed to a standard dynamic output  $H_{\infty}$ synthesis problem, as will be shown in the next theorem.

\begin{thm}\label{thm5}
	Given a certain $\hat \gamma>0$, we can then find such an event-triggered  controller with    $$K(s)=\left[\begin{smallmatrix} 
\begin{array}{c|c}
	 A_{k}  &   B_{k} \\ \hline 
	 C_{k} & D_{k}
\end{array}
\end{smallmatrix}\right]$$ that makes  the closed-loop system \dref{P},\dref{Delta},\dref{111},\dref{K} internally stable  by solving a standard dynamic output $\hat \gamma^{-1}$-suboptimal $H_{\infty}$ synthesis problem  with 
	$$\hat P_{mul}=\left[\begin{smallmatrix}
     	\begin{array}{c|cc}
	A  & \hat B_1 & \hat B_2 \\ \hline 
	\hat C_1 & \hat  D_{11} & \hat D_{12}\\
	\hat C_2 & \hat D_{21} & \hat D_{22}
\end{array}
     \end{smallmatrix}\right],$$ 
     where $\hat B_1=\begin{bmatrix}
     	0 & 0& B 
     \end{bmatrix}$, $\hat B_2=B$, $\hat C_1=\begin{bmatrix}
     	C\\ C\\ 0 
     \end{bmatrix}$, $\hat C_2=C$, $\hat D_{11}=\begin{bmatrix}
     	0&0&0\\
     	I&I&0 \\
     	0&0&0
     \end{bmatrix}$,
     $\hat D_{12}=\begin{bmatrix}
     	0\\0\\ I
     \end{bmatrix}$, $\hat D_{21}=\begin{bmatrix}
     	I& I& 0
     \end{bmatrix}$ and $\hat D_{22}=0$.
\end{thm}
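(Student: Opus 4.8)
The plan is to reproduce, almost verbatim, the argument used for Theorem \ref{thm2}, now driving it with Theorem \ref{thm4} and the Small Gain Theorem (Lemma \ref{smallgain}). First I would collect the two facts already available: by Theorem \ref{thm4} the sampling/uncertainty channel obeys $[d(s)^T,\epsilon(s)^T]^T=\hat\Gamma(\tilde v(s))$ with $\hat\Gamma(\cdot)$ finite-gain $\mathcal L_2$ stable and $\|\hat\Gamma\|_\infty\le\hat\gamma$, while \dref{close2} states that the rest of the loop is the linear system $\tilde v(s)=\hat M(s)\,[d(s)^T,\epsilon(s)^T]^T$ with $\hat M(s)=\left[\begin{smallmatrix}\hat A_{cl}&\hat B_{cl}\\ \hat C_{cl}&\hat D_{cl}\end{smallmatrix}\right]$. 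Consequently the closed-loop system \dref{P},\dref{Delta},\dref{111},\dref{K} is exactly the feedback interconnection of $\hat M(s)$ and $\hat\Gamma(\cdot)$, so Lemma \ref{smallgain} yields internal stability as soon as $\hat M(s)\in\mathcal{RH}_\infty$ and $\|\hat M(s)\|_\infty<\hat\gamma^{-1}$. It therefore remains to identify these two requirements with what a $\hat\gamma^{-1}$-suboptimal $H_\infty$ design for $\hat P_{mul}$ produces.

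The core step is to check that $\mathcal F_l(\hat P_{mul}(s),K(s))=\hat M(s)$. Since $\hat D_{22}=0$ the lower LFT is well posed for every proper $K$, and its realization is the standard one,
\begin{equation*}
\mathcal F_l(\hat P_{mul},K)=\left[\begin{smallmatrix}
A+\hat B_2 D_k\hat C_2 & \hat B_2 C_k & \hat B_1+\hat B_2 D_k\hat D_{21}\\
B_k\hat C_2 & A_k & B_k\hat D_{21}\\
\hat C_1+\hat D_{12}D_k\hat C_2 & \hat D_{12}C_k & \hat D_{11}+\hat D_{12}D_k\hat D_{21}
\end{smallmatrix}\right].
\end{equation*}
Substituting $\hat B_2=B$, $\hat C_2=C$, $\hat B_1=[\,0\ 0\ B\,]$, $\hat D_{21}=[\,I\ I\ 0\,]$, $\hat C_1=[\,C^T\ C^T\ 0\,]^T$, $\hat D_{12}=[\,0\ 0\ I\,]^T$ and $\hat D_{11}$ as defined, I would verify the four blocks one at a time: the $(1,1)$ block gives $A+BD_kC$, $BC_k$, $B_kC$, $A_k$, i.e.\ $\hat A_{cl}$; the $(1,2)$ block gives $[\,0\ 0\ B\,]+BD_k[\,I\ I\ 0\,]=[\,BD_k\ BD_k\ B\,]$ and $B_k[\,I\ I\ 0\,]=[\,B_k\ B_k\ 0\,]$, i.e.\ $\hat B_{cl}$; the $(2,1)$ block gives $[\,C^T\ C^T\ 0\,]^T+[\,0\ 0\ I\,]^TD_kC=[\,C^T\ C^T\ (D_kC)^T\,]^T$ together with $[\,0\ 0\ C_k^T\,]^T$, i.e.\ $\hat C_{cl}$; and the $(2,2)$ block gives $\hat D_{11}+[\,0\ 0\ I\,]^TD_k[\,I\ I\ 0\,]$, whose third block-row becomes $[\,D_k\ D_k\ 0\,]$, i.e.\ $\hat D_{cl}$. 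Throughout I would identify the exogenous input $w$ with $[d^T,\epsilon_y^T,\epsilon_u^T]^T$ and the regulated output $z$ with $\tilde v=[\lambda^T,\hat y^T,\hat u^T]^T$, which fixes the partitioning of $\hat B_1,\hat D_{11},\hat D_{21}$ and of $\hat C_1,\hat D_{12}$ and makes the comparison unambiguous.

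Once $\mathcal F_l(\hat P_{mul},K)=\hat M$ is in hand, internal stability of \dref{P},\dref{Delta},\dref{111},\dref{K} is equivalent, via Lemma \ref{smallgain}, to $\mathcal F_l(\hat P_{mul},K)\in\mathcal{RH}_\infty$ and $\|\mathcal F_l(\hat P_{mul},K)\|_\infty<\hat\gamma^{-1}$, i.e.\ to $K(s)$ solving the dynamic output feedback $\hat\gamma^{-1}$-suboptimal $H_\infty$ problem for $\hat P_{mul}$. I would also note that Assumption \ref{as1} makes this synthesis problem admissible: $(A,\hat B_2)=(A,B)$ is stabilizable, $(\hat C_2,A)=(C,A)$ is detectable, and $\hat D_{12}=[\,0\ 0\ I\,]^T$, $\hat D_{21}=[\,I\ I\ 0\,]$ have full column/row rank, so the controller obtained has the same order as $P_0(s)$; Zeno-freeness follows exactly as in Theorem \ref{thm3}. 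I expect the only genuinely delicate point to be the bookkeeping in this block-by-block LFT comparison --- keeping the three input sub-blocks $(d,\epsilon_y,\epsilon_u)$ and the three output sub-blocks $(\lambda,\hat y,\hat u)$ consistently aligned with the partitions of $\hat P_{mul}$ and of $\hat M$ --- after which the $\mathcal L_2$-stability and small-gain conclusions are immediate.
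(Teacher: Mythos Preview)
Your proposal is correct and follows precisely the approach the paper intends: the paper's own proof of Theorem \ref{thm5} reads in full ``The proof is similar to that of Theorem \ref{thm2} and thus is omitted for brevity,'' and your write-up is exactly that similarity made explicit --- invoking Theorem \ref{thm4} in place of Theorem \ref{thm1}, applying Lemma \ref{smallgain} to the $\hat M$--$\hat\Gamma$ loop, and then verifying block-by-block that $\mathcal F_l(\hat P_{mul},K)=\hat M$. There is nothing to add methodologically; your additional remarks on admissibility of the $H_\infty$ problem under Assumption \ref{as1} are a reasonable bonus beyond what the paper spells out.
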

\begin{proof}
The proof is similar to that of Theorem \ref{thm2} and thus is omitted for brevity.	 $\hfill\square$

\end{proof}
%\begin{remark}
%	It is easy to see that for given $P$ and $\Delta$, the infimum of $\gamma$ is just $\eta$. Therefore, to ensure that the stabilizing controller designed in the method of Theorem \ref{thm5} exists, it is necessary to guarantee that the $\gamma^{-1}$-suboptimal   standard $H_{\infty}$ synthesis problem referred in Theorem \ref{thm5} has a solution when $\gamma=\eta$. 
%\end{remark}

In light of this theorem, we can find such an event-triggered  controller  with triggering mechanism stated in Section \ref{mechanism}, triggering condition in \dref{triggercondition} and $K(s)=\left[\begin{smallmatrix} 
\begin{array}{c|c}
	 A_{k}  &   B_{k} \\ \hline 
	 C_{k} & D_{k}
\end{array}
\end{smallmatrix}\right]$, whose  parameters $A_k$, $B_k$, $C_k$, $D_k$, $\Omega_1>0$, $\Omega_2>0$, $\mu>0$ and $\nu>0$ can be determined according to the following algorithm.

\begin{alg}\label{alg2}
	An algorithm to find a stabilizing event-triggering controller for $P_{\Delta}=(I+\Delta)P_0$:
	
	Step 1: Solve the standard $H_{\infty}$ optimal synthesis problem  in Theorem \ref{thm5} and find the optimal $H_{\infty}$ level $\kappa_{opt}$. If $\kappa_{opt}^{-1}\geq \eta$, go to the next step. Otherwise, the stabilizing event-triggering protocol may not exist.
	
	Step 2: Choose positive numbers $\mu>0$, $\nu>0$. Determine positive-definite matrices $0<\Omega_1<\kappa_{opt}^{-2}I$  and $0<\Omega_2<\kappa_{opt}^{-2}I$ and calculate $\hat \gamma=\sqrt{\max\{\eta^2,\rho(\Omega_1),\rho(\Omega_2)\}}$.
	
	Step 3: Solve the $\hat \gamma^{-1}$-suboptimal $H_{\infty}$ synthesis problem  in Theorem \ref{thm5} to get the controller  $K(s)=\left[\begin{smallmatrix} 
\begin{array}{c|c}
	A_{k}  & B_{k} \\ \hline 
	C_{k} &  D_{k}
\end{array}
\end{smallmatrix}\right]$.
	
\end{alg}
 \begin{thm}\label{thm6}
 	The system interconnection resulted by the event-triggered sampling mechanism shown in subsection \ref{mechanism} and the controller  designed in Algorithm \ref{alg2} does not exhibit Zeno behavior.
 \end{thm}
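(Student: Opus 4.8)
The plan is to reproduce, almost verbatim, the contradiction argument used in the proof of Theorem \ref{thm3}, with the closed-loop realization \dref{close2} of the multiplicative interconnection playing the role that \dref{close} played there. First I would invoke the internal stability guaranteed by Theorem \ref{thm5} together with Algorithm \ref{alg2}: all internal signals stay bounded, so in particular $\zeta=[x_p^T,\,x_k^T,\,\xi^T]^T$ satisfies $\|\zeta(t)\|\le H$ for some constant $H$ and all $t\ge 0$. As in Theorem \ref{thm3} I would take $D_\Delta=0$ for simplicity.

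Next I would work out the error dynamics on a generic inter-event interval $[t_k,t_{k+1})$, on which the held values $\hat y,\hat u$ are constant. Differentiating $\epsilon_y=\hat y-y=\hat y-Cx_p-C_\Delta\xi$ and $\epsilon_u=\hat u-u=\hat u-C_kx_k-D_k\hat y$ along \dref{P}--\dref{K}, and substituting $\hat y=Cx_p+C_\Delta\xi+\epsilon_y$ and $\hat u=D_kCx_p+C_kx_k+D_kC_\Delta\xi+D_k\epsilon_y+\epsilon_u$, yields a linear constant-coefficient system $\dot\epsilon=\hat A\epsilon+\hat B\zeta$ of exactly the same type as \dref{dote}, with $\hat A=\left[\begin{smallmatrix}-CBD_k & -CB\\ -C_kB_k & 0\end{smallmatrix}\right]$ and a $\hat B$ whose blocks are built from $A,B,C,A_k,B_k,C_k,D_k,A_\Delta,B_\Delta,C_\Delta$; I would write these blocks out explicitly, but only their norms matter in what follows. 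Then I would argue by contradiction as before: suppose $\lim_{k\to\infty}t_k=T<\infty$ and set $\delta=\frac{\mu\|\hat A\|}{\|\hat B\|^2H^2}e^{-(4\|\hat A\|+\nu)T}$, so there is $K$ with $t_k\in(T-\delta,T]$ for all $k\ge K$. On $[t_K,t_{K+1})$ one has $\epsilon(t_K)=0$, and since the next trigger is the first instant at which $\|\epsilon\|^2$ reaches $v^T\left[\begin{smallmatrix}\Omega_1 & \\ & \Omega_2\end{smallmatrix}\right]v+\mu e^{-\nu t}\ge\mu e^{-\nu t}$, continuity gives $T_0\in[t_K,t_{K+1})$ with $\|\epsilon(T_0)\|^2=\mu e^{-\nu T_0}$. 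Applying the variation-of-constants formula $\epsilon(T_0)=\int_{t_K}^{T_0}e^{\hat A(T_0-\tau)}\hat B\zeta(\tau)\,d\tau$, together with $\|e^{\hat A t}\|\le e^{\|\hat A\|t}$, the Cauchy--Schwarz inequality and $\|\zeta\|\le H$, I would get $\|\epsilon(T_0)\|^2\le e^{4\|\hat A\|T}\frac{1}{2\|\hat A\|}\|\hat B\|^2H^2(T_0-t_K)$; combined with $\|\epsilon(T_0)\|^2=\mu e^{-\nu T_0}\ge\mu e^{-\nu T}$ this forces $T_0-t_K\ge 2\delta$, hence $t_{K+1}-t_K>\delta\ge T-t_K$, contradicting $t_K>T-\delta$.

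The only genuinely new bookkeeping relative to Theorem \ref{thm3} is getting the matrices $\hat A,\hat B$ right for the multiplicative loop (the plant output now being $\lambda+d=Cx_p+C_\Delta\xi$, which makes $\hat A$ slightly different from the additive case), so that is the step I expect to require the most care; the degenerate sub-case $\hat A=0$ should be handled separately, replacing the estimate above by the even more favorable quadratic bound $\|\epsilon(T_0)\|^2\le\|\hat B\|^2H^2(T_0-t_K)^2$, after which the same contradiction goes through. All the remaining norm estimates are a direct transcription of the additive case.
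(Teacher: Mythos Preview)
Your proposal is correct and follows exactly the approach the paper intends: the paper's own proof of Theorem \ref{thm6} is simply ``omitted for brevity,'' implicitly referring back to the contradiction argument of Theorem \ref{thm3}, and you have faithfully carried out that transcription to the multiplicative setting, including the correct modification of $\hat A$ (since here $d$ is driven by $\lambda=Cx_p$ rather than by $\hat u$) and the side treatment of the degenerate case $\hat A=0$.
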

\begin{proof}
	The proof is omitted for brevity. $\hfill\square$
\end{proof}
\begin{remark}
	Similar to the case of additive uncertainty, in the multiplicative uncertainty case the robust stabilizing controller synthesis problem is also transformed into a standard $H_{\infty}$ suboptimal control problem with the same order as the nominal linear plant $P_0$. The only difference lies  on the system parameters. Actually, through our operator theoretic approach, more kinds of uncertainties are able to be handled, including the coprime factor uncertainty, and even nonlinear dynamics  with Lipschitz constraints, which can also be represented as  an $\mathcal L_2$ stable operator \cite{linearrobustcontrol}. 
\end{remark}

\section{Extensions to Dynamic Event-Triggered Mechanisms}\label{s6}
In this section, we aim to design an event-triggered  controller robust against frequency-domain uncertainties  based on the dynamic event-triggered mechanism. 
To be more specific, 
inspired by \cite{Antoine2015}, we adopt the  dynamic   triggering function described as follows:
\begin{equation}\label{eq14}
\begin{aligned}
	f&=\epsilon^T\epsilon-v^T\begin{bmatrix}
	\Omega_1 & \\ & \Omega_2
\end{bmatrix}v-\chi,
\end{aligned}
\end{equation}
where $\epsilon$ and $v$ are defined as in Theorem \ref{thm1}, $\chi$ is the internal scalar  state with dynamics:
\begin{equation}\label{eq15}
\begin{aligned}
\dot \chi &=-\beta \chi-\alpha \left( \epsilon^T\epsilon-v^T\begin{bmatrix}
	\Omega_1 & \\ & \Omega_2
\end{bmatrix}v\right)
\end{aligned}
\end{equation}
with $\beta>0$, $\alpha>0$ and $\chi(0)>0$.
The samplers $S_1$ and $S_2$ will sample the output of the plant and the controller whenever $f\geq 0$.

For illustration, in this section we will only consider   linear systems perturbed by  an  additive dynamic uncertainty. The system equation is described as \dref{eq1} and \dref{eq2} and the controller is in the form of \dref{eq3}. 
Interestingly, the introduction of dynamic event-triggering mechanism does not change the conclusion we derive in Theorem \ref{thm1}, as shown in the next theorem.

\begin{thm}\label{thm7}
    Let $\epsilon$ and $v$ be the vectors defined as in Theorem \ref{thm1} and $\gamma=\sqrt{\max\{\rho(\Omega_1),\rho(\Omega_2+\eta^2I)\}}$.
	Under the dynamic event-triggered mechanism described above with the triggering function shown in  \dref{eq14}, we have $$\begin{bmatrix}
			\epsilon(s)\\ d(s)
		\end{bmatrix}=\Gamma(s)v(s),$$ where $\Gamma$ is an affine operator which is finite-gain $\mathcal L_2$ stable. Moreover, $\|\Gamma\|_{\infty}\leq \gamma$.
\end{thm}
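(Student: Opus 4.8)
The plan is to reuse the two-stage structure of the proof of Theorem~\ref{thm1}, noting that only the gain estimate is sensitive to the particular triggering rule. First I would observe that the dynamic rule \dref{eq14}--\dref{eq15} leaves the zero-order-hold structure of the loop untouched: on every inter-event interval $[t_k,t_{k+1})$ the held values $\hat y,\hat u$ are still piecewise constant, so line by line the same computation as in Theorem~\ref{thm1} expresses $\epsilon_y$ and $\epsilon_u$ as affine functions of $\hat u(t)$ and $\hat y(t)$ in the time domain, while $d(s)=\Delta(s)\hat u(s)$ with $\Delta\in\mathcal{RH}_{\infty}$ is affine in the frequency domain. Applying Lemma~\ref{l2} exactly as before gives $[\epsilon(s)^T,d(s)^T]^T=\Gamma(v(s))$ with $\Gamma(\cdot)=[\Phi(\cdot)^T,\Psi(\cdot)^T,\Delta'(\cdot)^T]^T$ an affine operator. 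Hence the only part of the statement that requires a new argument is the bound $\|\Gamma\|_{\infty}\le\gamma$ with the \emph{same} $\gamma=\sqrt{\max\{\rho(\Omega_1),\rho(\Omega_2+\eta^2I)\}}$ as in Theorem~\ref{thm1}.

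The new ingredient is a bound on the internal state $\chi$. I would first show, by the standard bootstrap argument of \cite{Antoine2015}, that $\chi(t)>0$ for all $t\ge 0$ and simultaneously $f(t)\le 0$ for all $t\ge 0$: since the sampler fires (resetting $\epsilon$ to $0$, which strictly lowers $f$) exactly when $f$ would reach $0$, since $f(0)<0$ because $\epsilon(0)=0$ and $\chi(0)>0$, and since after any reset $f=-v^T\begin{bmatrix}\Omega_1 & \\ & \Omega_2\end{bmatrix}v-\chi\le-\chi<0$ as long as $\chi>0$, one gets $f\le 0$ on the maximal interval where $\chi$ is still positive; substituting the resulting inequality $\epsilon^T\epsilon-v^T\mathrm{diag}(\Omega_1,\Omega_2)v\le\chi$ into \dref{eq15} yields $\dot\chi\ge-(\alpha+\beta)\chi$, so $\chi(t)\ge\chi(0)e^{-(\alpha+\beta)t}>0$ on that interval, which closes the bootstrap and shows $\chi$ never reaches $0$.

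With $\chi\ge0$ in hand, I would rearrange \dref{eq15} as $\epsilon^T\epsilon-v^T\mathrm{diag}(\Omega_1,\Omega_2)v=-\tfrac1\alpha\dot\chi-\tfrac\beta\alpha\chi$ and integrate over $[0,T]$; using $\chi(T)\ge0$ and $\int_0^T\chi\,dt\ge0$ gives $\int_0^T\bigl(\epsilon^T\epsilon-v^T\mathrm{diag}(\Omega_1,\Omega_2)v\bigr)dt\le\chi(0)/\alpha$ for every $T$, hence $\int_0^\infty\epsilon^T\epsilon\,dt\le\int_0^\infty v^T\mathrm{diag}(\Omega_1,\Omega_2)v\,dt+\chi(0)/\alpha$. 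This is the dynamic counterpart of the unnumbered static inequality $\int_0^\infty f\,dt\le0$ used in the proof of Theorem~\ref{thm1}, with $\mu/\nu$ replaced by the constant $\chi(0)/\alpha$.

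From this point the argument is a verbatim transcription of the end of the proof of Theorem~\ref{thm1}: combining the frequency-domain inequality $\int_{-\infty}^{\infty}d^*d\,d\omega\le\eta^2\int_{-\infty}^{\infty}\hat u^*\hat u\,d\omega$ (from $\|\Delta\|_{\infty}\le\eta$) with Parseval applied to the $\mathcal L_2$ bound above yields $\int_{-\infty}^{\infty}[\epsilon;d]^*[\epsilon;d]\,d\omega\le\gamma^2\int_{-\infty}^{\infty}v^*v\,d\omega+2\pi\chi(0)/\alpha$, exactly as in \dref{eq6}, so $\Gamma$ is finite-gain $\mathcal L_2$ stable with $\|\Gamma\|_{\infty}\le\gamma$. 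I expect the only genuine obstacle to be the second paragraph: because $\epsilon$ jumps at each triggering instant, the nonnegativity of $\chi$ cannot be read off directly from a comparison lemma and must be obtained together with $f\le0$ through the bootstrap; once that is settled, everything else is bookkeeping inherited from Theorem~\ref{thm1}.
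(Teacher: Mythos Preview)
Your proposal is correct and follows essentially the same two-stage structure as the paper: first establish $\chi(t)>0$ for all $t$ (the paper invokes the Comparison Lemma from $\dot\chi\ge-(\alpha+\beta)\chi$, which is the compressed version of your bootstrap), then integrate to obtain an $\mathcal L_2$ inequality that plays the role of \dref{eq5}, and finally repeat the end of Theorem~\ref{thm1} verbatim.

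The one minor methodological difference worth noting is in the integration step. You integrate \dref{eq15} \emph{directly}, obtaining $\int_0^T\bigl(\epsilon^T\epsilon-v^T\mathrm{diag}(\Omega_1,\Omega_2)v\bigr)dt\le\chi(0)/\alpha$. The paper instead re-uses the triggering inequality $f\le0$ once more to bound $\dot\chi\le-(\alpha+\beta)\bigl(\epsilon^T\epsilon-v^T\mathrm{diag}(\Omega_1,\Omega_2)v\bigr)$ and then integrates, yielding the sharper constant $\chi(0)/(\alpha+\beta)$ in place of your $\chi(0)/\alpha$. Since this constant only enters the affine offset (the analogue of $\mu/\nu$) and does not affect the operator-norm bound $\gamma$, both routes prove the theorem as stated; the paper's detour just buys a slightly smaller defect term, which later reappears as $2\pi\chi(0)/(\alpha+\beta)$ in \dref{IQC} and Lemma~\ref{l3}.
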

\begin{proof}
	Based on the triggering condition, we will always have $f\leq 0$, which implies 
	\begin{equation}\label{eq16}
	 \begin{aligned}
	\epsilon^T\epsilon-v^T\begin{bmatrix}
	\Omega_1 & \\ & \Omega_2
\end{bmatrix}v\leq\chi.
\end{aligned}
	\end{equation}
	Therefore, in light of \dref{eq15}, it then follows that $$\dot \chi\geq-(\beta +\alpha)\chi.$$
	Notice also that $\chi(0)>0$, by the well-known Comparison Lemma \cite{Khalil}, we have $\chi(t)>0,\forall t>0$.
	On the other hand, according to \dref{eq16}, we have 
	\begin{equation}\label{eq17}
		\dot \chi \leq -(\beta +\alpha)\left(\epsilon^T\epsilon-v^T\begin{bmatrix}
	\Omega_1 & \\ & \Omega_2
\end{bmatrix}v\right).
	\end{equation}
	Integrating \dref{eq17} from zero to infinity, it then follows that 
	$$
	\begin{aligned}
		&\chi(\infty)\leq\chi(0)\\&-\int_{0}^{\infty}(\beta +\alpha)\left(\epsilon^T\epsilon-v^T\begin{bmatrix}
	\Omega_1 & \\ & \Omega_2
\end{bmatrix}v\right)dt.
	\end{aligned}
	$$
	Noting that $\chi(\infty)>0$, we have 
	$$\int_{0}^{\infty}\left(\epsilon^T\epsilon-v^T\begin{bmatrix}
	\Omega_1 & \\ & \Omega_2
\end{bmatrix}v\right)dt\leq\frac{\chi(0)}{\beta+\alpha}.
	$$
According to the Passaval Identity \cite{Desoer1975feedback}, 
\begin{equation}\begin{aligned}
	&\int_{-\infty}^{\infty}\epsilon^*(j\omega)\epsilon(j\omega)d\omega\\
&\leq \int_{-\infty}^{\infty}v(j\omega)^*\begin{bmatrix}
	\Omega_1 & \\ & \Omega_2
\end{bmatrix}v(j\omega)d\omega+\frac{2\pi\chi(0)}{\alpha+\beta}
\end{aligned}
	\end{equation}
Combining with \dref{eq4}, we can get that 
\begin{equation}\label{eq19}
	\begin{aligned}
		&\quad\int_{-\infty}^{\infty}\begin{bmatrix}
			\epsilon(j\omega)\\ d(j\omega)
		\end{bmatrix}^*\begin{bmatrix}
			\epsilon(j\omega)\\ d(j\omega)
		\end{bmatrix} d\omega
		\\
		&\leq  \int_{-\infty}^{\infty}v^*(j\omega)\begin{bmatrix}
			\Omega_1 &\\
			& \Omega_2+\eta^2I
		\end{bmatrix}v(j\omega)d\omega+\frac{2\pi\chi(0)}{\alpha+\beta} 
		\\&\leq \gamma^2\int_{-\infty}^{\infty}v^*(j\omega)v(j\omega)d\omega+\frac{2\pi\chi(0)}{\alpha+\beta}.
		\end{aligned}
\end{equation}
Based on the definition of finite-gain $\mathcal L_2$ stablility and the $\mathcal H_{\infty}$ norm theory, it is not difficult to derive that $\Gamma(\cdot)$ is finite-gain $\mathcal L_2$ stable and $\|\Gamma\|_{\infty}\leq\gamma$. $\hfill\square$
\end{proof}
\begin{remark}
	This theorem tells that for a linear system perturbed by an additive dynamic uncertainty, the effect of the aforementioned dynamic event-triggered controller is the same as that of the static event triggering protocol. Therefore, the result of Theorem \ref{thm2} can be directly applied to the dynamic event-triggered case. Moreover, the dynamic event-triggered robust stabilization problem can still be solved using Algorithm \ref{alg1} with triggering mechanism and triggering function  replaced correspondingly. It is also very similar to exclude Zeno behavior.
Although here we only consider the linear system subject to an  additive dynamic uncertainty, it is  not difficult to see that for a linear system perturbed by other types of uncertainties, similar conclusions still hold and similar algorithms can still work.
\end{remark}

\section{General  IQC-Based  Event-Triggered Mechanisms  }\label{IQCsection}
From the above discussion, we  know that no matter the triggering condition is static or dynamic, as long as it gives a constraint that preserves the operator $\Gamma(\cdot)$ (the operator which maps $v(s)$ to $\begin{bmatrix}
	\epsilon(s)\\ d(s)
\end{bmatrix}$) a finite $\mathcal L_2$ gain, the event-triggered robust stabilization problem can then be transformed into the classical problem in the robust control theory. On the other hand, as pointed out in \cite{rantzer1997}, the small gain condition is a specialization of the integral quadratic constraints (IQC). Thus it is natural to ask whether the above `finite $\mathcal L_2$ gain preserving' triggering  condition  can be generalized to some triggering condition that characterizes quadratic constraint between $\begin{bmatrix}
	\epsilon(s)\\ d(s)
\end{bmatrix}$ and $v(s)$. 

To be more specific, we can rewrite \dref{eq19} as 
\begin{equation}\label{IQC}
\begin{aligned}
	&\int_{-\infty}^{\infty}\begin{bmatrix}
		v(j\omega) \\ w(j\omega)
	\end{bmatrix}^*\begin{bmatrix}
		\Omega &\\
		  & -I 
	\end{bmatrix} \begin{bmatrix}
		v(j\omega) \\ w(j\omega)
	\end{bmatrix}d\omega
	 +\frac{2\pi\chi(0)}{\alpha+\beta}\geq0,
\end{aligned}
\end{equation} where $w(j\omega)=\begin{bmatrix}
	\epsilon(j\omega)\\ d(j\omega)
\end{bmatrix}$ and $\Omega=\begin{bmatrix}
	\Omega_1  &\\
		 & \Omega_2+\eta^2I
\end{bmatrix}$.
 Inspired by \cite{rantzer1997}, one can naturally  ask: What if the Hermitian matrix $\begin{bmatrix}
		\Omega \\
		 & -I 
	\end{bmatrix}$ is replaced by a more general one $\Pi=\begin{bmatrix}
		\Pi_1 & \Pi_2\\
		\Pi_2^* & \Pi_3
	\end{bmatrix}$ or even a dynamical one $\Pi(\omega)=\begin{bmatrix}
		\Pi_1(\omega) & \Pi_2(\omega)\\
		\Pi_2^*(\omega) & \Pi_3(\omega)
	\end{bmatrix}$? Due to the existence of the positive defect term $\frac{2\pi\chi(0)}{\alpha+\beta}$, the answer is:  Only the `small-gain' like IQC can be used here. That is, $\Pi(j\omega)=\begin{bmatrix}
		\Pi_1(\omega) & \Pi_2(\omega)\\
		\Pi_2^*(\omega) & \Pi_3(\omega)
	\end{bmatrix}$ with $\Pi_1(\omega)>0$ and $\Pi_3(\omega)<0$.  
This result is concluded as  the following lemma:

\begin{lemma}\label{l3}(Internal stability criterion based on IQC with defect)
	Assume that $G(s)\in \mathcal{RH}_{\infty}$ and  signals $w$ and $v$ are in the  $\mathcal L_2$ space. Suppose that $v(s)=G(s)w(s)$ and $w(s)=\Delta(v(s))$, where $\Delta(\cdot)$ is a bounded causal operator, and  they satisfy the following IQC condition with defect $\xi>0$ defined by $\Pi$:
	\begin{equation}\label{eq20}
		\begin{aligned}
			\int_{-\infty}^{\infty}\begin{bmatrix}
				 v(j\omega) \\  w(j\omega)
			\end{bmatrix}^*\begin{bmatrix}
				\Pi_{1}(\omega) & \Pi_2(\omega)\\
				\Pi_2^*(\omega) & \Pi_3(\omega)
			\end{bmatrix}
			\begin{bmatrix}
				 v(j\omega) \\  w(j\omega)
			\end{bmatrix}d\omega+\xi\geq 0		\end{aligned}
	\end{equation}
	with $\Pi_1(\omega)>0$ and $\Pi_3(\omega)<0$,  $\forall \omega\in \mathbf R$, two Hermitian valued matrices with compatible dimensions.
	Then, if  
	\begin{equation}\label{eq21}
		\begin{bmatrix}
			G(j\omega)\\ I
		\end{bmatrix}^*\begin{bmatrix}
				\Pi_{1}(\omega) & \Pi_2(\omega)\\
				\Pi_2^*(\omega) & \Pi_3(\omega)
			\end{bmatrix}\begin{bmatrix}
			G(j\omega)\\ I
		\end{bmatrix}<0,\quad  \forall
    \omega\in \mathbf{R},
	\end{equation}
	then the system interconnection of $G$ and $\Delta$ is internally stable.
\end{lemma}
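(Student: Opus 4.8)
The plan is to imitate, at the level of bounded $\mathcal L_2$ operators, the standard proof of the Small Gain Theorem (Lemma~\ref{smallgain}), of which this lemma is the IQC generalization, and to verify that the only effect of the defect $\xi$ is to replace the closed-loop \emph{gain} bound by a \emph{gain-plus-constant} bound, so that internal stability --- a finite $\mathcal L_2$ bound on the internal signals in terms of disturbances injected anywhere in the loop --- is still obtained.

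The first step is to upgrade the pointwise condition \dref{eq21} to a uniform one. Write $\Theta(\omega):=\left[\begin{smallmatrix}G(j\omega)\\ I\end{smallmatrix}\right]^{*}\Pi(\omega)\left[\begin{smallmatrix}G(j\omega)\\ I\end{smallmatrix}\right]$ for the Hermitian-valued function appearing in \dref{eq21}. Since $G\in\mathcal{RH}_\infty$ and the entries of $\Pi$ are continuous and bounded on the extended imaginary axis (rational in all the applications; in general $G(j\omega)$ and $\Pi(\omega)$ are assumed to have limits as $\omega\to\infty$), $\Theta(\cdot)$ is continuous on the one-point compactification of $\mathbf R$ and negative definite there by \dref{eq21}; a compactness argument therefore yields an $\varepsilon>0$ with $\Theta(\omega)\le-\varepsilon I$ for all $\omega$, together with a uniform bound $\|\Pi\|_\infty<\infty$. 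The sign conditions $\Pi_1(\omega)>0$, $\Pi_3(\omega)<0$ enter precisely as the structural property that will make the constant $\xi$ absorbable in what follows.

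The second step is to break the loop by injecting $f,g\in\mathcal L_2$, so that $v=Gw+f$ and $w=\Delta(v)+g$; then $\tilde w:=w-g=\Delta(v)$ and $v=G\tilde w+h$ with $h:=f+Gg\in\mathcal L_2$. Because the IQC \dref{eq20} is a property of the operator $\Delta$, it holds for the pair $(v,\Delta(v))=(v,\tilde w)$; substituting $v=G\tilde w+h$ into \dref{eq20} and expanding splits the quadratic form into a nominal part $\int\tilde w^{*}\Theta\,\tilde w\,d\omega\le-2\pi\varepsilon\|\tilde w\|_2^2$, a cross term of modulus at most $c_1\|h\|_2\|\tilde w\|_2$, and a term at most $c_2\|h\|_2^2$, the constants $c_1,c_2$ depending only on $\|\Pi\|_\infty$ and $\|G\|_\infty$ (via Parseval and Cauchy--Schwarz). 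Adding $\xi$, the IQC reduces to a scalar quadratic inequality $\varepsilon\|\tilde w\|_2^2-c_1\|h\|_2\|\tilde w\|_2-c_2\|h\|_2^2-\xi/(2\pi)\le0$; since the leading coefficient $\varepsilon$ is positive, $\|\tilde w\|_2$ is bounded by the larger root, which is affine in $\|h\|_2$ plus a term of order $\sqrt{\xi}$. Propagating this bound through $v=G\tilde w+h$ and $w=\tilde w+g$ gives finite $\mathcal L_2$ estimates of $v$ and $w$ in terms of $\|f\|_2$, $\|g\|_2$ and $\sqrt{\xi}$, which is the asserted internal stability. Well-posedness and the a~priori membership of the loop signals in $\mathcal L_2$ are taken from the hypotheses of the lemma; in the fully general IQC theory they would instead be recovered by the homotopy argument of \cite{rantzer1997}.

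I expect the last step to be the main obstacle: turning the expanded IQC into a genuine, monotone bound on $\|\tilde w\|_2$ requires the nominal negative term to dominate \emph{both} the cross term and the constant $\xi$, and this is exactly why an arbitrary Hermitian $\Pi$ does not suffice and the ``small-gain-type'' restriction $\Pi_1>0$, $\Pi_3<0$ is indispensable --- without it the quadratic inequality may fail to possess a finite larger root and the offset cannot be controlled. A secondary point that needs care is that $\varepsilon$, $c_1$ and $c_2$ must be chosen independently of the unknown $\Delta$, so that the resulting gain-plus-constant bound is uniform over the entire admissible uncertainty class.
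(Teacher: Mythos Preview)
Your direct quadratic-estimate argument is correct and is the standard Megretski--Rantzer route, but the paper proceeds differently: it uses the hypotheses $\Pi_1(\omega)>0$, $\Pi_3(\omega)<0$ to note that $\Pi(\omega)$ is congruent to $\mathrm{diag}(I,-I)$, writes $\Pi=C^*\,\mathrm{diag}(I,-I)\,C$ for a nonsingular $C(j\omega)$, changes variables to $(\tilde v,\tilde w)=C(j\omega)(v,w)$, and observes that the IQC with defect then reads $\|\tilde w\|_2^2\le\|\tilde v\|_2^2+\xi/(2\pi)$, i.e.\ the transformed uncertainty has finite gain at most~$1$; internal stability follows from the Small Gain Theorem (Lemma~\ref{smallgain}) after a short computation showing that \dref{eq21} is equivalent to the transformed condition $\tilde G^*\tilde G<I$. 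The paper's reduction makes the role of the sign hypotheses completely transparent---they are exactly what fixes the inertia so that the congruence to $\mathrm{diag}(I,-I)$ exists---whereas in your expansion the positive leading coefficient $\varepsilon$ comes from the uniform FDI \dref{eq21}, not from the sign pattern of $\Pi$, so your quadratic inequality in fact closes without ever invoking $\Pi_1>0$ or $\Pi_3<0$; your last-paragraph worry that ``the quadratic inequality may fail to possess a finite larger root'' without them is therefore misplaced. Under the a~priori $\mathcal L_2$ hypothesis of the lemma your bound is self-contained; the sign conditions earn their keep in the paper's congruence step and, as you correctly note in passing, in the homotopy argument of \cite{rantzer1997} that one would need absent that hypothesis.
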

\begin{proof}
	Since $\Pi_1(\omega)>0$ and $\Pi_3(\omega)<0$ $\forall \omega\in \mathbf R$, it is easy to verify that $\begin{bmatrix}
				\Pi_{1}(\omega) & \Pi_2(\omega)\\
				\Pi_2^*(\omega) & \Pi_3(\omega)
			\end{bmatrix}$ is congruent to $\begin{bmatrix}
				I & 0 \\ 0 & -I
			\end{bmatrix}$, i.e., there exists a nonsingular matrix $C(j\omega) $ such that 
			$$\begin{bmatrix}
				\Pi_{1}(\omega) & \Pi_2(\omega)\\
				\Pi_2^*(\omega) & \Pi_3(\omega)
			\end{bmatrix}=C^*(j\omega)\begin{bmatrix}
				I & 0 \\ 0 & -I
			\end{bmatrix}C(j\omega) \quad\forall \omega\in \mathbf R.
			$$
   Let $\begin{bmatrix}
   	\tilde v(j\omega)\\ \tilde w(j\omega)
   \end{bmatrix}=C(j\omega)\begin{bmatrix}
   	v(j\omega) \\ w(j\omega)
   \end{bmatrix}$ and denote $\tilde w=\tilde\Delta(\tilde v )$ , it then follows from \dref{eq20} that \begin{equation}\label{eq22}
   	\|\tilde w\|_2^2\leq\|\tilde v\|_2^2+\frac{\xi}{2\pi}.  
   \end{equation} Therefore  $\|\tilde \Delta\|_{\infty}\leq 1$. In light of  Small Gain Theorem (Lemma \ref{smallgain}), the closed-loop system is internally stable if \begin{equation}\label{eq23}
   \begin{bmatrix}
   	\tilde G(j\omega) \\ I 
   \end{bmatrix}^*\begin{bmatrix}
   	I & 0 \\ 0 & -I
   \end{bmatrix}\begin{bmatrix}
   	\tilde G(j\omega) \\ I 
   \end{bmatrix}<0 \quad \forall \omega \in \mathbf R,\end{equation}
   where $\tilde G(j\omega)$ denotes the transfer matrix from $\tilde w$ to $\tilde v$.
   Noticing that $\begin{bmatrix}
   	0 & \tilde G(j\omega)  \\ 0 &I 
   \end{bmatrix}\begin{bmatrix}
   	\tilde v \\ \tilde w
   \end{bmatrix}=C(j\omega)\begin{bmatrix}
   	0 &  G(j\omega)  \\ 0 &I 
   \end{bmatrix}\begin{bmatrix}
   	v\\w 
   \end{bmatrix}$ and $\begin{bmatrix}
   	v\\ w
   \end{bmatrix}=C^{-1}(j\omega)\begin{bmatrix}
   	\tilde v \\ \tilde w
   \end{bmatrix}$, we have 
   $$\begin{aligned}
   	\begin{bmatrix}
   	\tilde G(j\omega)\\I 
   \end{bmatrix}& =C(j\omega)\begin{bmatrix}
   	0 &  G(j\omega)  \\ 0 &I 
   \end{bmatrix}C^{-1}(j\omega)\begin{bmatrix}
   	0\\ I
   \end{bmatrix}\\
   & = C(j\omega)\begin{bmatrix}
   	 G(j\omega)\\I 
   \end{bmatrix}\begin{bmatrix}
   	0 & I
   \end{bmatrix}C^{-1}(j\omega)\begin{bmatrix}
   	0\\ I
   \end{bmatrix}\\
   & = C(j\omega)\begin{bmatrix}
   	 G(j\omega)\\I 
   \end{bmatrix} D(j\omega)
   \end{aligned}$$ and $D(j\omega)$ is nonsingular $\forall \omega\in \mathbf R$.
 Therefore, \dref{eq23} is equivalent to 
 \dref{eq21}. $\hfill\square$
\end{proof} 
\begin{remark}
	It is not difficult  to see that we can relax the condition $\Pi_1(\omega)>0$ to $\Pi_1(\omega)\geq 0$ while not change the result derived above. 
\end{remark}
\begin{remark}
	This lemma extends the results in the classical IQC-based internal stability theorem in \cite{rantzer1997} in the sense that here we allow a defect term $\xi$ in the quadratic functional. However, a stronger constraint of the positivity  has to be imposed on the matrices $\Pi_1(\omega)$ and $-\Pi_3(\omega)$. In short, only when the IQC is  `small-gain' like  can we    introduce a defect term in the quadratic functional  while remain the internal stability condition unchanged. The defect term corresponds to the additional term $\chi$ in the event-triggered condition and is critical  and indispensable since we have to exclude Zeno behavior.   
\end{remark}
Now, we are ready to put forward  the IQC-based dynamic event-triggered control algorithm that is robust against additive dynamic uncertainty described in \dref{eq2}.
\begin{alg}\label{alg3} An algorithm to find a robust IQC based dynamic event-triggered protocol for $P_{\Delta}=P_0+\Delta$:

	Step 1: Solve the standard optimal 
	$H_{\infty}$ synthesis problem  in Theorem \ref{thm2} and find the optimal $H_{\infty} $ level $\gamma_{opt}$. If $\gamma_{opt}\eta<1$, go to the next  step . Otherwise the robust stabilizing event-triggered controller may not exist. 
	
	Step 2: Select a $\mathbf C^{(p+q)}\mapsto \mathbf C^{(p+q)}$ square transfer matrix $G_1(j\omega)=\left[\begin{smallmatrix} 
\begin{array}{c|c}
	A_{1}  & B_{1} \\ \hline 
	C_{1} &  D_{1}
\end{array}
\end{smallmatrix}\right]\in \mathcal{RH}_{\infty}$ such that $\|G_1(j\omega)\|_{\infty}^2=\sigma_1^2<\gamma_{opt}^{-2}-\eta^2$ and let $\bar v(t)\in \mathbf R^{p+q}:=\left[\begin{smallmatrix} 
\begin{array}{c|c}
	A_{1}  & B_{1} \\ \hline 
	C_{1} &  D_{1}
\end{array}
\end{smallmatrix}\right]v(t)$ (Here $v(t)$ is defined as in  Theorem \ref{thm1}).

Step 3: Select a nonsingular square transfer matrix $G_2(j\omega)=\left[\begin{smallmatrix} 
\begin{array}{c|c}
	A_{2}  & B_{2} \\ \hline 
	C_{2} &  D_{2}
\end{array}
\end{smallmatrix}\right]\in \mathcal{RH}_{\infty}$ such that $D_2$ is nonsingular and  $\|G_2(j\omega)\|_{\infty}\leq 1$. Denote $\bar\epsilon(t)\in\mathbf R^{p+q} :=\left[\begin{smallmatrix} 
\begin{array}{c|c}
	A_{2}-B_2D_2^{-1}C_2  & B_{2}D_2^{-1} \\ \hline 
	D_2^{-1}C_{2} &  D_{2}^{-1}
\end{array}
\end{smallmatrix}\right]\begin{bmatrix}
	\epsilon_y\\ \epsilon_u
\end{bmatrix}$, that is $\bar\epsilon(s)=G_2(s)^{-1}\epsilon(s)$.
	
	Step 4: Set the triggering function to be $$f=\bar\epsilon(t)^T\bar\epsilon(t)-\bar v(t)^T\bar v(t)-\chi(t)$$
	with $\chi(0)>0$ and 
	$$ \dot\chi(t)=-\beta\chi(t)-\alpha(\bar\epsilon(t)^T\bar\epsilon(t)-\bar v(t)^T\bar v(t)),$$
	where $\beta>0$ and $\alpha>0$.
	When $f\geq 0$ at time $t_k$, the sampler $S_1$ and $S_2$ samples the output of the plant ($y(t_k)$) and the output of the controller ($u(t_k)$) respectively, i.e., $\hat y$ is set to be $y(t_k)$, $\hat u$ is set to be $u(t_k)$, and $\epsilon_y$ and $\epsilon_u$ are reset to be $0$. Otherwise, $\hat y$ and $\hat u$ remain unchanged while $\epsilon_y=\hat y-y(t)$ and $\epsilon_u=\hat u-u(t)$.
	
	Step 5: Choose $\gamma_{opt}<\gamma<\frac{1}{\sqrt{\sigma_1^2+\eta^2}}$ and solve the $\gamma$-suboptimal $H_{\infty}$ problem in Theorem \ref{thm2} and get the controller $K(s)=\left[\begin{smallmatrix} 
\begin{array}{c|c}
	A_{k}  & B_{k} \\ \hline 
	C_{k} &  D_{k}
\end{array}
\end{smallmatrix}\right]$.

\end{alg}

The next Theorem shows the effectivity of the aforementioned event-triggered controller design algorithm.

\begin{thm}
	For the linear system \dref{eq1} subject to the additive dynamic uncertainty \dref{eq2} with $H_{\infty}$ norm bound $\eta$, if we apply the event-triggering mechanism and the controller $K(s)$ designed in Algorithm \ref{alg3}, then the closed-loop system is internally stable. Moreover, no Zeno behavior exhibits.
\end{thm}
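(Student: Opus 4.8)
The plan is to split the claim into its two halves --- internal stability and absence of Zeno behavior --- and dispatch each with machinery already in place. For internal stability I will use Lemma~\ref{l3}, the IQC-with-defect stability criterion: the triggering rule of Algorithm~\ref{alg3} was reverse-engineered from the IQC \dref{IQC} so that the event-triggering operator $\Gamma(\cdot)$ (mapping $v(s)$ to $\begin{bmatrix}\epsilon(s)\\ d(s)\end{bmatrix}$) obeys a ``small-gain''-type IQC with a strictly positive defect; then the task is only to produce that IQC and check the frequency condition \dref{eq21} for the nominal loop $M(s)$ synthesized in Step~5. For the no-Zeno part I will run the contradiction argument from the proof of Theorem~\ref{thm3}, the closed-loop state being bounded by the internal stability just established.

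For internal stability, the first step is that, since $S_1$ and $S_2$ sample exactly when $f\geq0$, one has $f\leq0$ throughout, i.e.\ $\bar\epsilon^{T}\bar\epsilon-\bar v^{T}\bar v\leq\chi$ for all $t$. Repeating the argument in the proof of Theorem~\ref{thm7} verbatim --- substitute this into the $\chi$-dynamics, apply the Comparison Lemma with $\chi(0)>0$ to get $\chi(t)>0$, then integrate $\dot\chi\leq-(\alpha+\beta)(\bar\epsilon^{T}\bar\epsilon-\bar v^{T}\bar v)$ over $[0,\infty)$ --- yields $\int_{0}^{\infty}(\bar\epsilon^{T}\bar\epsilon-\bar v^{T}\bar v)\,dt\leq\chi(0)/(\alpha+\beta)$. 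Parseval's identity together with $\bar\epsilon(s)=G_{2}(s)^{-1}\epsilon(s)$ and $\bar v(s)=G_{1}(s)v(s)$ turns this into $\int_{-\infty}^{\infty}\epsilon^{*}G_{2}^{-*}G_{2}^{-1}\epsilon\,d\omega\leq\int_{-\infty}^{\infty}v^{*}G_{1}^{*}G_{1}v\,d\omega+\tfrac{2\pi\chi(0)}{\alpha+\beta}$, and adding the uncertainty inequality \dref{eq4} in the form $\int_{-\infty}^{\infty}d^{*}d\,d\omega\leq\eta^{2}\int_{-\infty}^{\infty}v^{*}v\,d\omega$ produces an IQC of exactly the form \dref{eq20} for $v$ and $w=\begin{bmatrix}\epsilon\\ d\end{bmatrix}$, with defect $\xi=\tfrac{2\pi\chi(0)}{\alpha+\beta}>0$, $\Pi_{1}=G_{1}^{*}G_{1}+\eta^{2}I>0$, $\Pi_{2}=0$ and $\Pi_{3}=-\operatorname{diag}(G_{2}^{-*}G_{2}^{-1},I)<0$; the sign of $\Pi_{3}$ uses $\|G_{2}\|_{\infty}\leq1\Rightarrow G_{2}^{-*}G_{2}^{-1}\geq I$ and the nonsingularity of $G_{2}$ required in Step~3.

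Next, the transfer matrix $G$ in Lemma~\ref{l3} is the map $M(s)$ from $w$ to $v$, which --- by the same computation as in Theorem~\ref{thm2} --- equals $\mathcal F_{l}(P_{add},K)$ and is rendered stable with $\|M\|_{\infty}<\gamma$ by the $\gamma$-suboptimal $H_{\infty}$ synthesis of Step~5. Since $\Pi_{2}=0$, \dref{eq21} collapses to $M^{*}\Pi_{1}M+\Pi_{3}<0$; and $M^{*}\Pi_{1}M=(G_{1}M)^{*}(G_{1}M)+\eta^{2}M^{*}M\leq(\|G_{1}\|_{\infty}^{2}+\eta^{2})\|M\|_{\infty}^{2}I<(\sigma_{1}^{2}+\eta^{2})\gamma^{2}I$ while $-\Pi_{3}\geq I$, so the condition holds precisely because Step~5 takes $\gamma<(\sigma_{1}^{2}+\eta^{2})^{-1/2}$. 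Finally $\Gamma(\cdot)$ is a bounded causal operator --- causality is structural (the reasoning in the proofs of Theorems~\ref{thm1} and \ref{thm7}) and boundedness follows from the IQC above, since $G_{2}^{-*}G_{2}^{-1}\geq I$ forces $\|\epsilon\|_{2}^{2}\leq\sigma_{1}^{2}\|v\|_{2}^{2}+\tfrac{2\pi\chi(0)}{\alpha+\beta}$ and $\|d\|_{2}\leq\eta\|v\|_{2}$ --- so Lemma~\ref{l3} applies and gives internal stability of the $M$--$\Gamma$ interconnection, i.e.\ of the closed loop.

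For the absence of Zeno behavior the plan mirrors the proof of Theorem~\ref{thm3}. Internal stability makes the full closed-loop state --- $x_{p}$, $x_{k}$, $\xi$, the internal states of the filters $G_{1}$ and $G_{2}^{-1}$, and $\chi$ --- uniformly bounded, and $\chi(t)\geq\chi(0)e^{-(\alpha+\beta)t}$ keeps the triggering threshold $\bar v^{T}\bar v+\chi$ uniformly positive on any finite $[0,T]$. Between consecutive triggers $\epsilon$ is reset to $0$ and obeys the linear ODE \dref{dote} with bounded forcing, so $\|\epsilon(t)\|$ grows at most linearly in the elapsed time; if $t_{k}\to T<\infty$ then on a shrinking window $(T-\delta,T]$ the signal $\epsilon$ stays $O(\delta)$. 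The main obstacle, and the new feature relative to Theorem~\ref{thm3}, is that the triggering function is written in the \emph{filtered} error $\bar\epsilon=G_{2}^{-1}\epsilon$, whose state part $D_{2}^{-1}C_{2}x_{G_{2}}$ is not reset at a trigger; one must show that right after each trigger $f$ is pushed strictly below zero with a gap uniform over $[0,T]$, so that it takes a minimum dwell time for $f$ to climb back to zero. This is exactly where nonsingularity of $D_{2}$ is essential: the feedthrough term $D_{2}^{-1}\epsilon$ carries the full ``excess'' that fired the trigger and vanishes when $\epsilon$ is reset, while on $(T-\delta,T]$ the surviving parts of $\bar\epsilon$, $\bar v$ and $\chi$ vary by only $O(\delta)$; combining this with the linear-in-time growth of $\|\epsilon\|$ and the uniformly positive threshold gives a positive lower bound on the inter-event times, contradicting $t_{k}\to T$. (If $G_{1}$ and $G_{2}$ are taken constant --- the simplest admissible choice --- $\bar\epsilon$ does reset to zero and this reduces verbatim to the proof of Theorem~\ref{thm3}.)
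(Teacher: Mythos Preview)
Your proof is correct and follows essentially the same route as the paper. The internal-stability half matches the paper step for step: derive the $\bar\epsilon,\bar v$ energy bound as in Theorem~\ref{thm7}, pass to the frequency domain, add the uncertainty bound \dref{eq4}, and invoke Lemma~\ref{l3} for $M(s)$. Your verification of \dref{eq21} is a mild shortcut --- you bound $M^{*}\Pi_{1}M\leq(\sigma_{1}^{2}+\eta^{2})\gamma^{2}I$ and compare with $-\Pi_{3}\geq I$ directly, whereas the paper first conjugates by $\operatorname{diag}(G_{2},I)$ to reach the equivalent inequality $\operatorname{diag}(G_{2}^{*},I)M^{*}(G_{1}^{*}G_{1}+\eta^{2}I)M\operatorname{diag}(G_{2},I)<I$ --- but the two are logically identical.

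For the Zeno part you actually go further than the paper, which simply says the argument ``is similar to the proof of Theorem~\ref{thm3} and thus is omitted.'' You correctly flag the genuine new wrinkle: the triggering condition is written in $\bar\epsilon=G_{2}^{-1}\epsilon$, whose filter state does not reset at a trigger, so one must argue that the feedthrough contribution $D_{2}^{-1}\epsilon$ (which does reset) is what drives $f$ across zero while the remaining pieces vary only $O(\delta)$ on a putative Zeno window. Your sketch is at the right level of detail for what the paper claims; the only caveat is that $\bar v=G_{1}v$ also has a feedthrough jump at each trigger (since $v=[\hat y;\hat u]$ updates), so a fully rigorous dwell-time bound would need to track that term as well --- but this is no more than the paper itself asserts without proof.
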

\begin{proof}
	Following the proof of Theorem \ref{thm7}, it is not difficult to derive that 
	$$\int_0^{+\infty}(\bar \epsilon^T(t)\bar\epsilon(t)-\bar v^T(t)\bar v(t))dt\leq \dfrac{\chi(0)}{\alpha+\beta}.$$
	Therefore, we have $$\int_{-\infty}^{\infty}\begin{bmatrix}
		\bar v(j\omega)\\ \bar \epsilon(j\omega)
	\end{bmatrix}^*\begin{bmatrix}
		I & 0\\ 0 & -I
	\end{bmatrix}\begin{bmatrix}
		\bar v(j\omega)\\ \bar \epsilon(j\omega)
	\end{bmatrix}d\omega+\frac{2\pi\chi(0)}{\alpha+\beta}\geq0.$$
	Notice that $\begin{bmatrix}
		\bar v(j\omega)\\ \bar \epsilon(j\omega)
	\end{bmatrix}=\begin{bmatrix}
		G_1(j\omega) & 0\\ 0 & G_2^{-1}(j\omega)
	\end{bmatrix}\begin{bmatrix}
		 v(j\omega)\\  \epsilon(j\omega)
	\end{bmatrix}$, it then follows that\begin{equation}\label{eq24}
		\int_{-\infty}^{\infty}\begin{bmatrix}
		 v(j\omega)\\  \epsilon(j\omega)
	\end{bmatrix}^*\begin{bmatrix}
		\Psi_1(\omega) & 0\\ 0 & -\Psi_2(\omega)
	\end{bmatrix}\begin{bmatrix}
		 v(j\omega)\\  \epsilon(j\omega)
	\end{bmatrix}d\omega+\frac{2\pi\chi(0)}{\alpha+\beta}\geq0
	\end{equation}
	with $\Psi_1(\omega)=G_1^*(j\omega)G_1(j\omega)\geq0 $ and $\Psi_2(\omega)=G_2^{-*}(j\omega)G_2^{-1}(j\omega)>0$,  $\forall \omega\in \mathbf R$.
	On the other hand, in light of \dref{eq4}, $$\int_{-\infty}^{\infty}(v^*(j\omega)\eta^2Iv(j\omega)-d^*(j\omega)d(j\omega))d\omega\geq 0.$$
	Combining with \dref{eq24}, we then have 
	\begin{equation}\label{eq25}
	\begin{aligned}
		&\int_{-\infty}^{\infty}\begin{bmatrix}
		   v(j\omega)\\ \epsilon(j\omega)\\ d(j\omega)
		\end{bmatrix}^*\begin{bmatrix}
			\Psi_1(\omega)+\eta^2I & & \\
			 & -\Psi_2(\omega) & \\
			 & & -I
		\end{bmatrix}\begin{bmatrix}
		   v(j\omega)\\ \epsilon(j\omega)\\ d(j\omega)
		\end{bmatrix}d\omega\\
		&+\frac{2\pi\chi(0)}{\alpha+\beta}\geq0.
	\end{aligned}
	\end{equation}
	Applying Lemma \ref{l3}, the system is  internally stable, if 
	\begin{equation}\label{eq26}
		\begin{aligned}
			\begin{bmatrix}
				M(j\omega)\\ I
			\end{bmatrix}^*\begin{bmatrix}
			\Psi_1(\omega)+\eta^2I & & \\
			 & -\Psi_2(\omega) & \\
			 & & -I
		\end{bmatrix}\begin{bmatrix}
				M(j\omega)\\ I
			\end{bmatrix}<0
		\end{aligned}
	\end{equation}
	$\forall \omega\in \mathbf R.$
	This condition holds if and only if 
	$$M^*(j\omega)(\Psi_1(\omega)+\eta^2I)M(j\omega)<\begin{bmatrix}
		\Psi_2(\omega) & \\ & I
	\end{bmatrix},$$
	which is equivalent to 
	\begin{equation}\label{eq27}
		\begin{aligned}
		&\begin{bmatrix}
			G_2^*(j\omega) & \\ & I
		\end{bmatrix}M^*(j\omega)(G_1^*(j\omega)G_1(j\omega)+\eta^2I)M(j\omega)\\&\times\begin{bmatrix}
			G_2(j\omega) & \\ & I
		\end{bmatrix}<I.
	\end{aligned}
	\end{equation}
	 When the event triggering mechanism and the controller $K(s)$ is designed as in the Algorithm \ref{alg3}, we have $\|G_2(j\omega)\|_{\infty}\leq1$ and meanwhile $$\sqrt{\|G_1(j\omega)\|^2_{\infty}+\eta^2}\|M(j\omega)\|_{\infty}<1.$$ It is easy to verify that \dref{eq27} holds. Thus the closed-loop system is internally stable. The excluding of the Zeno behavior is similar to the proof of Theorem \ref{thm3} and thus is omitted for brevity.  $\hfill\square$
 
\end{proof}

\begin{remark}
	Actually, as shown in \dref{eq24},  the event-triggered protocol designed in Algorithm \ref{alg3} establishes a more general IQC compared to the one shown in \dref{IQC}. The to-be-developed  dynamical systems $G_1(s)$ and $G_2(s)$ endow a great amount of flexibility and degree of freedom to the controller design and the performance optimization.
\end{remark}

\section{Simulation Results }\label{s7}
In this section, a design example is illustrated for a linear system with additive uncertainty to manifest the effectiveness of the proposed algorithm. 
Consider the following  strictly proper linear system: $$P(s)=\left[\begin{smallmatrix} 
\begin{array}{cc|c}
	-12.5 & 5.9 & 1 \\ -7.1 & 13.8 & 2 \\ \hline 
	-4  & 5.5  &  0
\end{array}
\end{smallmatrix}\right]$$ and assume that the additive perturbation is of the form: 
$$\Delta(s)=\left[\begin{smallmatrix} 
\begin{array}{cc|c}
	-15.4 & 10.7 & -1.24 \\ -15.7 & -1.41 & -1.28 \\ \hline 
	1.165  & -2.07  &  0
\end{array}
\end{smallmatrix}\right].$$ 
It is easy to calculate that $\|\Delta\|_{\infty}=\eta=0.1112$. 

\subsection{Verifying the effectiveness of Algorithm \ref{alg1}}
Based on Algorithm \ref{alg1}, by solving the standard optimal $H_{\infty}$ synthesis problem in Theorem \ref{thm2}, we find that the optimal $H_{\infty}$ level $\lambda_{opt}=3.0683<\eta^{-1}$. Therefore the robust stabilization problem is solvable. Following Step $1$, we set $\mu=0.1$, $\nu=5$, $\Omega_1=\lambda_{opt}^{-2}\times0.98=0.1041$ and $\Omega_2=(\lambda_{opt}^{-2}-\eta^{2})\times 0.98=0.0920$. It is then easy to derive that $\gamma=0.3230$. Solving the $\gamma^{-1}$-suboptimal $H_{\infty}$ synthesis problem, we can then get a controller $$K(s)=\left[\begin{smallmatrix} 
\begin{array}{c|c}
	-11.9130 & -0.7953  \\ \hline 
	-3.5437  & -1.8130  
\end{array}
\end{smallmatrix}\right].$$
Applying  the  controller and event-triggered mechanism designed as above, we  then depict the evolution of the internal states $x_p$ and $ x_k$ with time. As shown in  Fig. \ref{fig4}, the closed-loop system is internally stabilized by the designed event-triggered control law. Moreover, we depict the evolution of $\|\epsilon\|$  and $\sqrt{\|v\|^2+\mu e^{-\nu t}}$ between $0s$ to $1.5s$ in Fig. \ref{fig5} whose intersections represent the triggering instants. This figure shows clearly that there is no Zeno behavior. 
\begin{figure}
\begin{center}
\includegraphics[height=6cm]{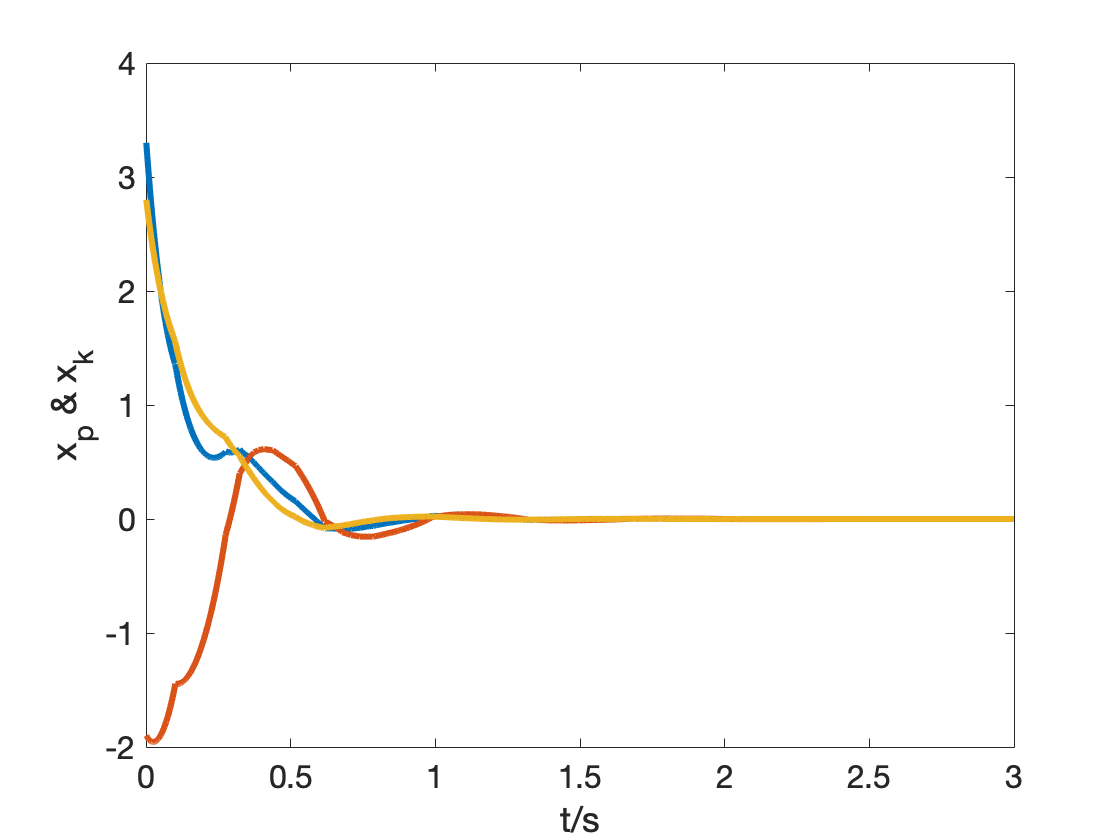}      
\caption{The evolution of the internal states $x_p$ and $x_k$. }  % width is 8.4 cm.
\label{fig4}                                 
\end{center}                                 
\end{figure}

\begin{figure}
\begin{center}
\includegraphics[height=6cm]{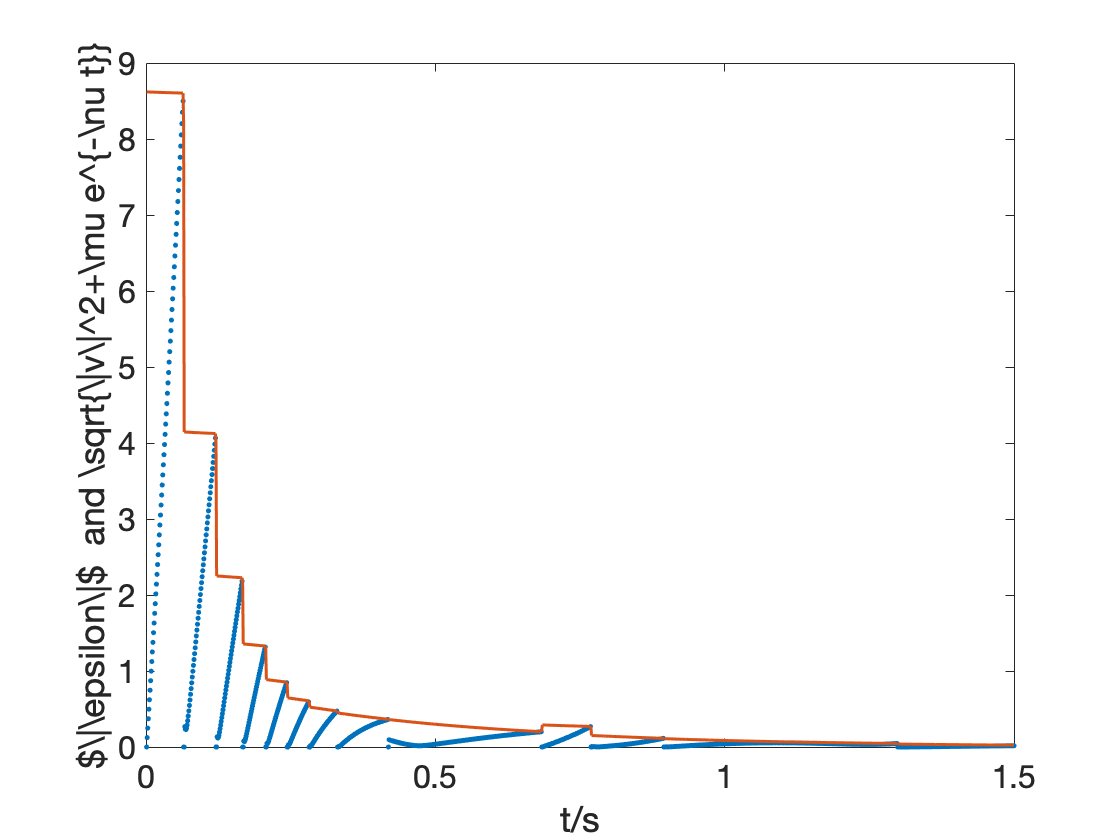}      
\caption{The evolution of  $\|\epsilon\|$  and $\sqrt{\|v\|^2+\mu e^{-\nu t}}$, where the blue dotted line denotes the evolution of $\|\epsilon\|$ and the red line denotes the evolution of $\sqrt{\|v\|^2+\mu e^{-\nu t}}$. The triggering instants are those when the blue dotted line crosses the red line. }  % width is 8.4 cm.
\label{fig5}                                 
\end{center}                                 
\end{figure}

\subsection{Verifying the effectiveness of  Algorithm \ref{alg3}}
Next, we move on to test  the effectivity of the Algorithm \ref{alg3}.
We still consider the plant $P(s)$ subject to additive dynamic uncertainty $\Delta(s)$ in the previous case. Based on  Step $1$, the robust event-triggered stabilization problem is solvable. In  Step $2$ and Step $3$, we set $$G_1=\left[\begin{smallmatrix} 
\begin{array}{cc|cc}
	-1.9 & 6.7 & 0.1019 & -0.2209 \\  
	-4.3  & -10.4 & 0.7561 & -0.4842 \\ \hline
	0.3 & -4.1 & 0.0025 &  0 \\
	0.58 & 0.39 & 0.0671 & 0.1529 
\end{array}
\end{smallmatrix}\right]$$ and 
$$G_2=\left[\begin{smallmatrix} 
\begin{array}{cc|cc}
	-19 & 16.7 & 0.2 & -2.6 \\  
	-43  & 0.4 & 1.9 & -0.57 \\ \hline
	0.064 & 0.8743 & 0.7677 &  0 \\
	0.1237 & -0.0832 & 0.0671 & -0.5118 
\end{array}
\end{smallmatrix}\right].$$ Note that $\|G_1\|_{\infty}=\sqrt{0.95(\gamma_{opt}^{-2}-\eta^2)}=0.2986$ and $\|G_2\|_{\infty}=0.9<1$. The initial value of the internal state of $G_1$ and $G_2^{-1}$ is randomly chosen. 
In Step $4$, we set $\alpha=2.5$, $\beta=1.0$ and randomly choose a positive $\chi(0)$. In Step $5$, we set $\gamma= 0.8(\frac{1}{\sqrt{\|G_1\|_{\infty}^2+\eta^2}}-\gamma_{opt}) + \gamma_{opt}=3.1244$. Solving the $\gamma$-suboptimal $H_{\infty}$ problem in Theorem \ref{thm2}, we can get that $$K(s)=\left[\begin{smallmatrix} 
\begin{array}{cc|c}
	-9.0230 & 126.3555 & 0.7510  \\  
	-324.2813  & -14216 & -174.0507  \\ \hline
	-0.1597 & 148.4024 & 0 \\ 
\end{array}
\end{smallmatrix}\right].$$

Applying the controller and the event-triggered mechanism developed here, we can then draw the evolution of the internal states $x_p$ and $x_k$ from $0s$ to $6s$ in Fig. \ref{fig6}. It is clear from Fig. \ref{fig6} that the closed-loop system is  internally stable.
\begin{figure}
\begin{center}
\includegraphics[height=6cm]{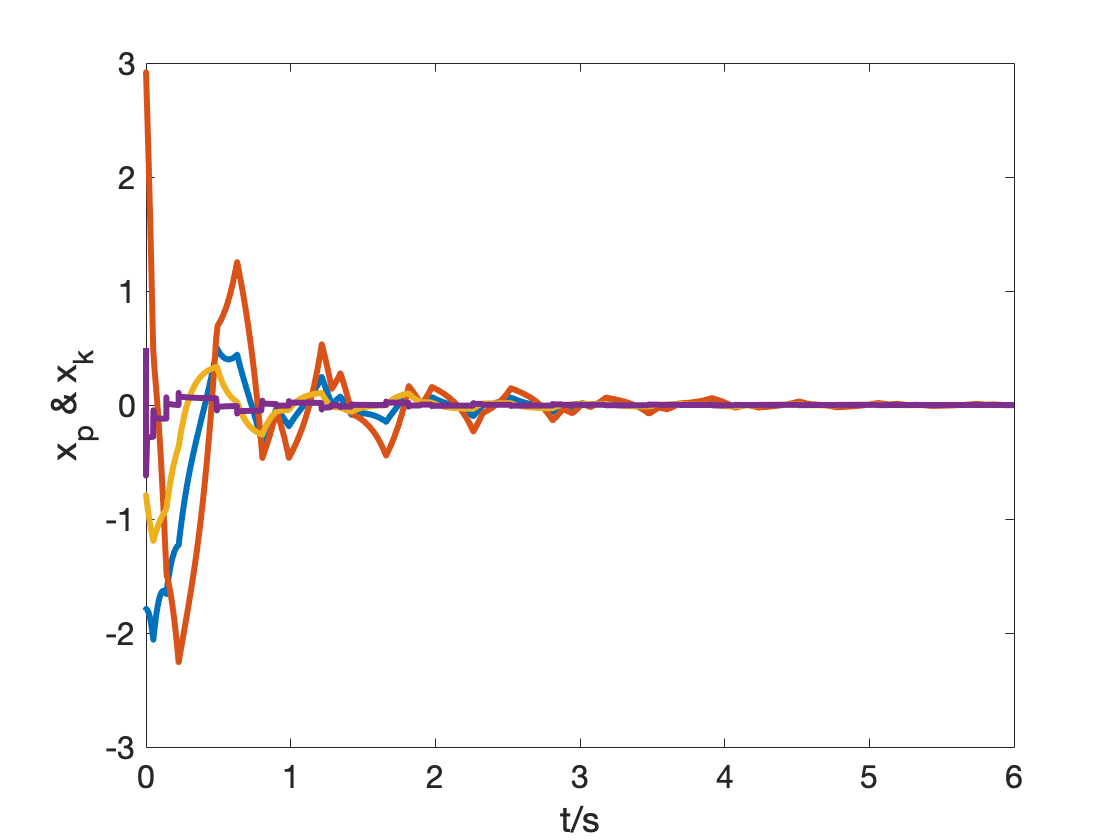}      
\caption{The evolution of the internal states $x_p$ and $x_k$ from  $0s$ to $6s$. }  % width is 8.4 cm.
\label{fig6}                                 
\end{center}                                 
\end{figure}

To show that the closed-loop system does not exhibit Zeno behavior, we draw the evolution of $\|\epsilon\|$ and $\sqrt{\|v\|^2+\chi}$ respectively from  $0s$ to $3s$ in Fig. \ref{fig7}. The triggering instants are their intersections and it is clear that there is no Zeno behavior. 

\begin{figure}
\begin{center}
\includegraphics[height=6cm]{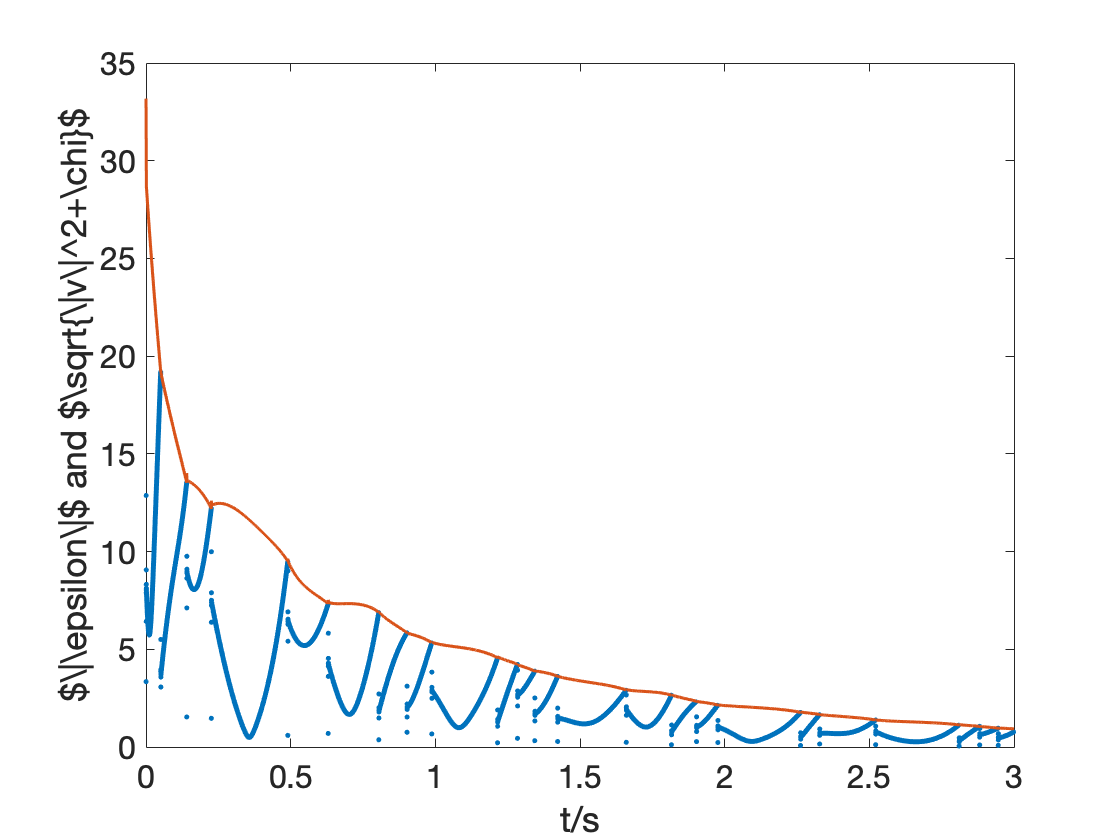}      
\caption{The evolution of $\|\epsilon\|$ and $\sqrt{\|v\|^2+\chi}$ from  $0s$ to $3s$, where the blue dotted line represents that of $\|\epsilon\|$ and the red line represents that of $\sqrt{\|v\|^2+\chi}$. The triggering instants are those when the blue dotted line crosses the red line. }  % width is 8.4 cm.
\label{fig7}                                 
\end{center}                                 
\end{figure}

\section{Conclusion}\label{s8}
In this paper, we have established an operator-theoretic approach for the robust event-triggered control problem of general linear systems subject to frequency-domain uncertainties. 
By showing that under the typical static and dynamic, and even the generalized IQC-based event triggering mechanisms, the mapping from sampled outputs to sampling errors are essentially finite-gain $\mathcal L_2$ stable affine operators, robust event-triggered control laws can be systematically designed by solving the standard $H_{\infty}$ synthesis problem of a modified linear system. 

There are many potential extensions to the present work, e.g., considering the robust $H_2$ and $H_{\infty}$ performance of event-triggered controllers and designing robust consensus event-triggered protocols for linear or nonlinear multi-agent systems.       
\bibliography{autosam}           % and a bib file to produce the 
                                 % bibliography (preferred). The
                                 % correct style is generated by
                                 % Elsevier at the time of printing.

%\appendix
%\section{A summary of Latin grammar}    % Each appendix must have a short title.
%\section{Some Latin vocabulary}         % Sections and subsections are supported  
                                        % in the appendices.

\end{document}